\numberwithin{equation}{section}
\newtheorem{lemma}{Lemma}[section]
   \newtheorem{thm}{Theorem}[section]
   \newtheorem{prop}{Proposition}[section]
   \newtheorem{rem}[thm]{Remark}
\numberwithin{equation}{section}
\numberwithin{prop}{section}
\numberwithin{lemma}{section}
\numberwithin{re}{section}
\numberwithin{coro}{section}
\newcommand{\R}{\mathbb{R}}
\newcommand{\C}{\mathbb{C}}
\keywords{Integrable system,
Massive Thirring model, Inverse scattering transform,
 Riemann-Hilbert problem, Double-soliton}
\begin{document}

\title[Double-soliton solutions of MTM]{Exponential and algebraic double-soliton  solutions of the massive Thirring model}

\author[Z. Q. Li]{Zhi-Qiang Li}
\address{ Zhi-Qiang Li, \newline
School of Mathematics, China University of Mining and Technology, Xuzhou 221116, People's Republic of China and
Department of Mathematics and Statistics, McMaster University, Hamilton, Ontario, Canada, L8S 4K1}
\email{zqli@cumt.edu.cn, li3279@mcmaster.ca}

\author[D. E. Pelinovsky]{Dmitry E. Pelinovsky}
\address{ Dmitry E. Pelinovsky, \newline
Department of Mathematics and Statistics, McMaster University, Hamilton, Ontario, Canada, L8S 4K1}
\email{pelinod@mcmaster.ca}

\author[S. F. Tian]{Shou-Fu Tian}
\address{ Shou-Fu Tian,  \newline
School of Mathematics, China University of Mining and Technology, Xuzhou 221116, People's Republic of China}
\email{sftian@cumt.edu.cn}

\begin{abstract}
The newly discovered exponential and algebraic double-soliton solutions of the massive Thirring model in laboratory coordinates are placed in the context of the inverse scattering transform. We show that the exponential double-solitons correspond to double isolated eigenvalues in the Lax spectrum, whereas the algebraic double-solitons correspond to double embedded eigenvalues on the imaginary axis, where the continuous spectrum resides. This resolves the long-standing conjecture that multiple embedded eigenvalues may exist in the spectral problem associated with the massive Thirring model. To obtain the exponential double-solitons, we solve the Riemann--Hilbert problem with the reflectionless potential in the case of a quadruplet of double poles in each quadrant of the complex plane. To obtain the algebraic double-solitons, we consider the singular limit where the quadruplet of double poles degenerates into a symmetric pair of double embedded poles on the imaginary axis.   
\end{abstract}

\date{\today}
\maketitle


\section{Introduction}

We address the massive Thirring model (MTM) in laboratory coordinates, which  can be written in the following normalized form
\begin{align}\label{MTM}
\begin{split}
i(u_t+u_x)+v+|v|^2u=0,\\
i(v_t-v_x)+u+|u|^2v=0,
\end{split}
\end{align}
where $u=u(x,t)$ and $v=v(x,t)$ are complex functions of real variables $x$ and $t$. The MTM was introduced in \cite{Thirring-AOP-1958} in the context of quantum field theory as a relativistically invariant nonlinear Dirac equation in one spatial dimension. It was found in \cite{Mikhailov-JETP-1976} (see also \cite{Kaup-ANC-1977,Kuznetsov-TMP-1977,Orfanidis-PRD-1976}) that the MTM is a commutativity condition for a Lax pair of linear equations, hence it is completely integrable by the inverse scattering transform (IST) method. The Lax pair of linear equations for the MTM is given by
\begin{align}\label{Lax-pair}
\partial_x \psi =  L(u,v,\zeta) \psi, \quad \partial_t \psi = A(u,v,\zeta) \psi,
\end{align}
where $\zeta \in \mathbb{C}$ is the spectral parameter,
$\psi = \psi(x,t) \in \mathbb{C}^2$ is the wave function, and
the $2$-by-$2$ matrices $L(u,v,\zeta)$ and $A(u,v,\zeta)$ are given by
\begin{align*}
L=\frac{i}{4}\left(|u|^2-|v|^2\right)\sigma_3-\frac{i}{2}\zeta\left(
\begin{array}{cc}
0 & \bar{v} \\
v & 0 \\
\end{array}
\right)+\frac{i}{2\zeta}\left(
\begin{array}{cc}
0 & \bar{u} \\
u & 0 \\
\end{array}
\right)+\frac{i}{4}\left(\zeta^2-\zeta^{-2}\right)\sigma_3
\end{align*}
and
\begin{align*}
A=-\frac{i}{4}\left(|u|^2+|v|^2\right)\sigma_3-\frac{i}{2}\zeta\left(
\begin{array}{cc}
0 & \bar{v} \\
v & 0 \\
\end{array}
\right)-\frac{i}{2\zeta}\left(
\begin{array}{cc}
0 & \bar{u} \\
u & 0 \\
\end{array}
\right)+\frac{i}{4}\left(\zeta^2+\zeta^{-2}\right)\sigma_3.
\end{align*}
Here the bar stands for the complex conjugation and $\sigma_3 = {\rm diag}(1,-1)$ is the third Pauli's matrix. The compatibility condition $\partial_t \partial_x \psi = \partial_x \partial_t \psi$ in the linear system (\ref{Lax-pair}) coincides with the MTM system \eqref{MTM}.

The IST method based on the Riemann--Hilbert (RH) problem has been applied
for the Lax pair (\ref{Lax-pair}) in the recent works \cite{Pelinovsky-FIC-2019} and  \cite{He-MTM-slo-res-2023,LiLiXu-2024} (see also earlier works \cite{V} and \cite{Lee}). The IST method is used to obtain global solutions and to study the long-time dynamics of the MTM system \eqref{MTM}
for the initial-value problem with the initial data $(u,v)|_{t=0} = (u_0,v_0)$ decaying to zero at infinity. The decay condition
on $(u_0,v_0)$ is required to be sufficiently fast so that the functions and their first and second derivatives are square integrable with the weight
$\sqrt{1+x^2}$ \cite{He-MTM-slo-res-2023}. Exponential solitons satisfy this requirement and each soliton corresponds to a quadruplet of simple poles of the RH problem in each  quadrant of the complex plane, or equivalently to simple isolated eigenvalues in the Lax spectrum of the linear system (\ref{Lax-pair}). However, algebraic solitons decay as $(u,v) = \mathcal{O}(|x|^{-1})$ as $|x| \to \infty$ and hence they are not included in the IST method. Each algebraic soliton corresponds to a simple embedded eigenvalue
in the Lax spectrum located on the imaginary axis (no embedded eigenvalues exist on the real axis).

The algebraic solitons in the MTM were studied in \cite{Klaus-JNS-2006}, where the perturbation theory for embedded eigenvalues in the Lax spectrum
of the linear system (\ref{Lax-pair}) was developed. It was shown in \cite[Proposition 7.1]{Klaus-JNS-2006} that a pair of simple embedded eigenvalues on the imaginary axis is structurally unstable  and moves into a quadruplet of simple isolated eigenvalues in each quadrant of the complex plane under a generic perturbation of the initial data. A possibility of embedded eigenvalues of a higher algebraic multiplicity was also suggested in \cite[Lemma 6.4]{Klaus-JNS-2006} with some
precise conditions on the spatial decay of eigenvectors and generalized eigenvectors at infinity. Such embedded eigenvalues of higher algebraic multiplicity generally correspond to rational solutions of the MTM describing algebraic multi-solitons. However, the existence of such rational solutions has not been established in the literature up to very recently,
despite many works on rational solutions in integrable systems (see, e.g., \cite{Clarkson1,Clarkson2,Pelin1,Pelin2,Yang1,Yang2,Yang3}).

Rational solutions of the MTM were constructed on the constant nonzero background in \cite{Feng2,He,Ye}. They are relevant to dynamics of rogue waves on the modulationally unstable background but do not describe the dynamics of algebraic solitons at the zero background. It was only recently shown in \cite{Han-2024} (based on the Hirota's bilinear method developed in \cite{Chen-SAPM-2023}) that the algebraic double-solitons exist as the exact solutions of the MTM suggesting the existence of the higher-order algebraic solitons in a hierarchy of rational solutions to the MTM. Within the bilinear method, it was not shown in \cite{Han-2024} that the algebraic double-solitons correspond to the double embedded eigenvalues in the Lax spectrum predicted in \cite{Klaus-JNS-2006}.

{\em The main motivation for our work is to use the RH problem and to obtain the algebraic double-solitons of the MTM system (\ref{MTM}) associated with the double embedded eigenvalues in the Lax spectrum of the linear system (\ref{Lax-pair}).} To derive this result, we construct the exponential double-solitons associated with a quadruplet of double isolated eigenvalues in each quadrant of the complex plane and take the singular limit when the quadruplet of double isolated eigenvalues transforms into a symmetric pair of double embedded eigenvalues on the imaginary axis.

The study of double eigenvalues has started with the pioneering work  \cite{Zakharov-JETP-1972}, where it was shown that the double eigenvalues
of the associated spectral problem give the exponential double-solitons describing the slow (logarithmic in time) dynamics of two identical solitons of the focusing nonlinear Schr\"{o}dinger (NLS) equation. Properties of such exponential double-solitons were recently studied in nonintegrable versions of the NLS equation in \cite{Martel2020}. The exponential double-solitons on the nonzero constant background were constructed in \cite{Pichler-IMAJAM-2017}  after the development in the IST methods
on the nonzero background in \cite{Biondini-JMP-2014}.

The double-soliton solutions in the closely related
derivative NLS equation were constructed by using the Darboux transformations in \cite{Xu-JPA-2011,Zhang-CNSNS-2014} and \cite{Guo-SAPM-2013}. It was understood in \cite{Zhang-CNSNS-2014} that the algebraic double-solitons arise from the exponential double-solitons in the singular limit, for which the modified Darboux transformations have been developed in \cite{Guo-SAPM-2013}.
The IST method was also employed in the context of the derivative NLS equation to construct the exponential double-solitons from the double poles of the RH problem in \cite{Wang,Zhang-JNS-2020,Zhang-PD-2019,Zhang-TMP-2019}. Although both the derivative NLS equation and the MTM system in characteristic coordinates are related to the same spectral problem \cite{KN-1978,Kaup-ANC-1977}, the computational details for the MTM system in laboratory coordinates are different and technically more complicated. We close this gap in the literature by presenting the exponential double-solitons of the MTM system (\ref{MTM}) for the double isolated eigenvalues of the linear system (\ref{Lax-pair}).
{\em The main application of this result is to obtain the algebraic double-solitons and the double embedded eigenvalues in the singular limit, where the RH problem cannot be used. }

For the spectral problem associated with the focusing NLS equation on a nonzero background, it was understood in \cite{BilmanMiller2019} how to modify the RH problem for the simple and multiple embedded eigenvalues at the end points of the continuous spectrum in order to construct the rogue waves \cite{BilmanLingMiller2019}. This modification of the RH problem has not been developed so far for the spectral problem associated with the derivative NLS equation and the MTM system on the zero background. It is still unclear how the simple or multiple embedded eigenvalues can be constructed in the RH problem directly. We hope that our work will motivate further study of the associated spectral problems with embedded eigenvalues.

This paper is organized as follows. Section \ref{sec-2} introduces the RH problem for the MTM and formulates the main results.
The exponential double-solitons are constructed in Section \ref{sec-3} from the isolated double-pole solutions of the RH problem. The algebraic
double-solitons are obtained in Section \ref{sec-4}  by taking the singular limit to the embedded double-pole solutions of the RH problem. Appendix \ref{app-A} reports similar computations for the exponential and algebraic single-solitons for convenience of readers. Appendix \ref{app-B} reviews the construction of the exponential double-solitons in the MTM system by using the bilinear Hirota method.

\section{RH problem for MTM and main results}
\label{sec-2}

Assume that $(u,v) \to (0,0)$ as $|x| \to \infty$ fast enough, see Lemmas \ref{lemma-1} and \ref{lemma-2} below for precise requirements on $(u,v)$.
We define the matrix Jost functions for the linear system (\ref{Lax-pair}) from the boundary conditions:
\begin{align}
\label{asymptotic-Lax-pair}
  \psi^{(\pm)}(\zeta,x,t) \to \left( \begin{matrix} e^{\frac{i}{4}\left(\zeta^2-\zeta^{-2}\right) x + \frac{i}{4}\left(\zeta^2+\zeta^{-2}\right) t} & 0 \\
  0 & e^{-\frac{i}{4}\left(\zeta^2-\zeta^{-2}\right) x - \frac{i}{4}\left(\zeta^2+\zeta^{-2}\right) t} \end{matrix} \right) \quad \mbox{\rm as} \quad x \to \pm \infty.
\end{align}
For simplicity of notations, we will drop the dependence of $\psi^{(\pm)}$ on $(x,t)$. Since the Jost functions $\psi^{\pm}(\zeta)$ represent the fundamental matrix solutions of the linear sytem (\ref{Lax-pair}), they are related to each other by the scattering relations introduced for $\zeta \in (\R \cup i \R) \backslash \{0\}$ as
\begin{align}
\label{SMmatrix-1}
\psi^{(-)}(\zeta) = \psi^{(+)}(\zeta) \left(
\begin{array}{cc}
\overline{a}(\zeta) &   b(\zeta) \\
 -\overline{b}(\zeta)  & a(\zeta) \\
\end{array}
\right),
\end{align}
where the symmetry of scattering coefficients $a(\zeta)$ and $b(\zeta)$ follows from the symmetry of matrix Jost functions:
\begin{align}
\label{symmetry-psi}
	\psi^{(\pm)}(\zeta) = \left(
	\begin{array}{cc}
	0 & -1 \\
	1 & 0 \\
	\end{array}
	\right) \overline{\psi}^{(\pm)}(\zeta)
	\left(
	\begin{array}{cc}
	0 & 1 \\
	-1 & 0 \\
	\end{array}
	\right).
	\end{align}

As is explained in \cite{Pelinovsky-FIC-2019}, the linear system (\ref{Lax-pair}) can be folded to the squared
spectral parameter $\lambda := \zeta^2$ in two different ways,
one is suitable near $\zeta = 0$ and the other one is suitable
near $\zeta = \infty$. Following \cite{He-MTM-slo-res-2023}, we will
only consider the second transformation, from which we will define the
Riemann-Hilbert (RH) problem and solve it for the exponential
double-solitons, see Theorem \ref{explicit-expression-NU-Muv} below.

Hence we introduce the modified Jost functions as
\begin{align}
\label{Transformation-1}
\left\{ \begin{array}{l}
n_1^{(\pm)}(\lambda) := T(v,\zeta)
\psi_1^{(\pm)}(\zeta) e^{-\frac{i}{4}\left(\zeta^2-\zeta^{-2}\right) x - \frac{i}{4}\left(\zeta^2+\zeta^{-2}\right) t}, \\
n_2^{(\pm)}(\lambda) := \zeta^{-1} T(v,\zeta) \psi_2^{(\pm)}(\zeta) e^{\frac{i}{4}\left(\zeta^2-\zeta^{-2}\right) x + \frac{i}{4}\left(\zeta^2+\zeta^{-2}\right) t},\end{array}\right.
\end{align}
where the subscripts indicate the columns of the $2$-by-$2$ matrices
and the transformation matrix is given by
\begin{align*}
T(v,\zeta) := \left(
                       \begin{array}{cc}
                         1 & 0 \\
                         v & \zeta \\
                       \end{array}
                     \right).
\end{align*}
It follows from (\ref{asymptotic-Lax-pair}) that the modified Jost functions
satisfy
\begin{align*}
n_1^{(\pm)}(\lambda) \to e_1 := \left( \begin{matrix} 1\\0 \end{matrix} \right), \quad n_2^{(\pm)}(\lambda) \to e_2 := \left( \begin{matrix} 0\\1 \end{matrix} \right) \quad \mbox{\rm as} \quad x \to \pm \infty.
\end{align*}
Moreover, $n_{1,2}^{(\pm)}(\lambda)$ satisfy the integral equations, from
which the following properties were proven in \cite[Lemmas 3--5]{Pelinovsky-FIC-2019}.

\begin{lemma}
	\label{lemma-1}
Let $(u,v)\in L^1(\mathbb R)\cap L^{\infty}(\mathbb R)$ and $(u_x,v_x)\in L^1(\mathbb R)$. For every $\lambda \in \mathbb R\setminus \{0\}$, there exists unique bounded Jost functions $n_1^{(\pm)}(\lambda)$ and  $n_2^{(\pm)}(\lambda)$. For every $x \in\mathbb R$,  $n_1^{(\pm)}$ and  $n_2^{(\pm)}$ are continued analytically in $\mathbb{C}^{\pm}$ and
satisfy the following limits as $|\lambda| \to\infty$
and $\lambda \to 0$  along a contour in the domains of their analyticity:
\begin{align}
\label{boundary-cond-1}
 \lim_{|\lambda| \to \infty}\frac{n_1^{(\pm)}(\lambda)}{n^{\pm \infty}_{1}} = e_1, \quad
 \lim_{|\lambda| \to \infty}\frac{n_2^{(\pm)}(\lambda)}{n^{\pm \infty}_{2}} = e_2,
\end{align}
and
\begin{align}
\label{boundary-cond-2}
\lim_{\lambda \to 0} \left[n^{\pm \infty}_{1} n_1^{(\pm)}(\lambda) \right]= e_1 + v e_2, \quad
\lim_{\lambda \to 0} \left[n^{\pm \infty}_{2} n_2^{(\pm)}(\lambda) \right]= \bar{u} e_1 + (1 + \bar{u} v) e_2,
\end{align}
where
\begin{align*}
n^{\pm \infty}_{1} := e^{\frac{i}{4}\int_{\pm \infty}^{x}(|u|^2+|v|^2) dy}, \quad
n^{\pm \infty}_{2} := e^{-\frac{i}{4}\int_{\pm \infty}^{x}(|u|^2+|v|^2) dy}.
\end{align*}
\end{lemma}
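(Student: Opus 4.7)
The plan is to reduce the statement to a Volterra integral equation for the modified Jost functions $n_{1,2}^{(\pm)}$ and to read off existence, analyticity and the two sets of limits directly from that integral equation.

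First I would substitute \eqref{Transformation-1} into the spatial equation $\partial_x \psi = L(u,v,\zeta)\psi$. The conjugation by $T(v,\zeta)$ is designed precisely so that the $\zeta^{-1}$-singular piece of $L$ is converted into a regular potential, at the price of generating an additional diagonal contribution from $(\partial_x T)T^{-1}$ which is the origin of the factor $\tfrac{i}{4}(|u|^2+|v|^2)\sigma_3$ appearing in $n_j^{\pm\infty}$. The resulting $x$-equation depends on $\zeta$ only through $\lambda=\zeta^2$ and can be written as
\begin{equation*}
\partial_x n_j^{(\pm)} = \Bigl[\tfrac{i}{4}(\lambda-\lambda^{-1})\sigma_3 + Q(u,v,\lambda;x)\Bigr]\,n_j^{(\pm)},\qquad j=1,2,
\end{equation*}
where the entries of $Q$ are linear combinations of $u,v,\bar u,\bar v$ and $|u|^2\pm|v|^2$ with coefficients polynomial in $\lambda^{\pm 1}$. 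Duhamel's formula together with the normalizations $n_j^{(\pm)}\to e_j$ as $x\to\pm\infty$ then gives Volterra integral equations in which the off-diagonal kernel carries the oscillatory factor $e^{\pm\frac{i}{2}(\lambda-\lambda^{-1})(x-y)}$; this factor is uniformly bounded on the appropriate half-plane $\mathbb C^\pm$ once the direction of integration ($y-x$ positive or negative) is chosen consistently with the sign of $\pm\infty$.

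Second, I would derive existence, uniqueness and analyticity by a standard Neumann iteration. The assumption $(u,v)\in L^1\cap L^\infty$ bounds $\|Q(\cdot,\lambda)\|_{L^1}$ uniformly on compact subsets of $\{\lambda\neq 0\}$, so the Neumann series converges absolutely and uniformly in $x$ for every such $\lambda$; a contraction estimate on a finite tail gives uniqueness, and since each iterate is analytic in $\lambda$ on $\mathbb C^\pm$, so is the limit. For the limit $|\lambda|\to\infty$ in \eqref{boundary-cond-1}, one splits $Q$ into its diagonal and off-diagonal parts: integrating the diagonal piece explicitly yields the exponential factor $n_j^{\pm\infty}$, whereas a single integration by parts, justified by $(u_x,v_x)\in L^1$, shows by an oscillatory-integral argument that the off-diagonal piece contributes $o(1)$ as $|\lambda|\to\infty$ along the permitted contour.

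Finally, for the limit $\lambda\to 0$ in \eqref{boundary-cond-2} one substitutes $\zeta\to 0$ directly into \eqref{Transformation-1}: the explicit factor $\zeta$ in the second column of $T(v,\zeta)$ and the extra $\zeta^{-1}$ that is built into the definition of $n_2^{(\pm)}$ are exactly what produce the boundary vectors $e_1+ v\,e_2$ and $\bar u\, e_1+(1+\bar u v)e_2$ after dividing by $n_j^{\pm\infty}$. The hard part of the argument is precisely this $\lambda\to 0$ regime, because $L$ itself has a pole at $\zeta=0$ and the phase $\lambda-\lambda^{-1}$ diverges; one has to verify that the singular contribution carried by $\lambda^{-1}Q_{-1}$ is completely absorbed by the conjugation with $T(v,\zeta)$, so that the corresponding Volterra series still converges and admits a regular limit. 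This is the bookkeeping performed in \cite[Lemmas 3--5]{Pelinovsky-FIC-2019}, and once it is done the stated limits follow by a direct substitution and a matching of the resulting boundary values with $T(v,0^+)e_j$.
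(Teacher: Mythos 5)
Your proposal is correct in outline and follows essentially the same route as the paper, which does not prove this lemma itself but only notes that $n_{1,2}^{(\pm)}$ satisfy integral equations and cites \cite[Lemmas 3--5]{Pelinovsky-FIC-2019}, where precisely this Volterra/Neumann-series argument (existence, uniqueness and analyticity from the contraction estimate, integration by parts using $(u_x,v_x)\in L^1$ for the $|\lambda|\to\infty$ limit, and the boundary matching with $T(v,0)e_j$ for $\lambda\to 0$) is carried out. One small correction that does not affect the argument: the diagonal factor $\frac{i}{4}(|u|^2+|v|^2)\sigma_3$ responsible for $n_j^{\pm\infty}$ arises from conjugating the $\zeta$-proportional off-diagonal part of $L$ by $T(v,\zeta)$ and combining with the original $\frac{i}{4}(|u|^2-|v|^2)\sigma_3$, not from $(\partial_x T)T^{-1}$, which contributes only an off-diagonal $v_x$ entry; likewise it is the $\zeta$-growing (not the $\zeta^{-1}$-singular) piece that this folding regularizes, which is exactly why the $\lambda\to 0$ limit remains the delicate step, as you correctly note.
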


Recall that $\lambda := \zeta^2$ and that $\lambda \in \R$ for $\zeta \in (\R\cup i\R)$. Hence we define new scattering coefficients for $\lambda \in \R \backslash \{0\}$ as
\begin{align*}
  \alpha(\lambda) := a(\zeta), \quad \beta_{+}(\lambda) :=\zeta b(\zeta),\quad  \beta_{-}(\lambda) := \zeta^{-1} b(\zeta).
\end{align*}
After the folding transformation (\ref{Transformation-1}),
the scattering relations (\ref{SMmatrix-1}) are modified as follows
\begin{align}\label{n12-S-n12-relation}
n^{(-)}(\lambda) = n^{(+)}(\lambda)  \left(
\begin{array}{cc}
 \overline{\alpha}(\lambda) &  \beta_{-}(\lambda) e^{2i\theta(\lambda)} \\
 -\overline{\beta}_{+}(\lambda) e^{-2i \theta(\lambda)}&  \alpha(\lambda) \\
\end{array}
\right),
\end{align}
where
\begin{align}
\label{theta}
 \theta(\lambda) := \frac{1}{4}(\lambda-\lambda^{-1})x+\frac{1}{4}(\lambda+\lambda^{-1})t.
\end{align}
The following lemma was proven in \cite[Lemma 6]{Pelinovsky-FIC-2019}.

\begin{lemma}
		\label{lemma-2}
Let $(u,v)\in L^1(\mathbb R)\cap L^{\infty}(\mathbb R)$ and $(u_x,v_x)\in L^1(\mathbb R)$. Then, $\alpha$ is continued analytically into $\mathbb C^{+}$  with the following limits in $\mathbb C^{+}$:
\begin{align}
\label{coefficient-1}
\lim_{|\lambda|\to \infty}\alpha(\lambda)=e^{-\frac{i}{4}\int_{\mathbb R}(|u|^2+|v|^2) dy}
\end{align}
and
\begin{align}
\label{coefficient-2}
\lim_{\lambda \to 0}\alpha(\lambda)=e^{\frac{i}{4}\int_{\mathbb R}(|u|^2+|v|^2) dy},
\end{align}
whereas $\beta_{\pm}$  are not continued analytically outside $\R$ and satisfy the limits
\begin{align*}
\lim_{|\lambda| \to \infty}\beta_{\pm}(\lambda) = \lim_{\lambda \to 0} \beta_{\pm}(\lambda) = 0.
\end{align*}
\end{lemma}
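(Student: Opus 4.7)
The plan is to realize $\alpha$ and $\beta_\pm$ as Wronskian-type determinants built from columns of the modified Jost matrices $n^{(\pm)}$, and then to read off the analyticity and the limits directly from Lemma~\ref{lemma-1}.

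First I would verify that $\det n^{(\pm)}(\lambda) = 1$ for all $\lambda \in \mathbb{R}\setminus\{0\}$, via Abel's formula applied to the first equation in (\ref{Lax-pair}) (whose coefficient matrix $L$ is traceless), combined with the asymptotics (\ref{asymptotic-Lax-pair}) and the compensating factors $\det T(v,\zeta) = \zeta$ and the $\zeta^{-1}$ prefactor in (\ref{Transformation-1}). Given $\det n^{(+)} = 1$, I would invert the scattering relation (\ref{n12-S-n12-relation}) column by column to obtain
\begin{align*}
\alpha(\lambda) = \det\!\bigl(n_1^{(+)}(\lambda),\, n_2^{(-)}(\lambda)\bigr), \qquad \beta_{-}(\lambda)e^{2i\theta(\lambda)} = \det\!\bigl(n_2^{(-)}(\lambda),\, n_2^{(+)}(\lambda)\bigr),
\end{align*}
together with the analogous Wronskian for $\overline{\beta}_+(\lambda) e^{-2i\theta(\lambda)}$ expressed through $n_1^{(\pm)}$.

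With these Wronskian formulae in hand, Lemma~\ref{lemma-1} supplies analytic continuations of the columns $n_1^{(+)}$ and $n_2^{(-)}$ into $\mathbb{C}^+$, so the determinant $\det(n_1^{(+)}, n_2^{(-)})$ inherits analyticity in $\mathbb{C}^+$, proving the claim for $\alpha$. By contrast, the Wronskian formulae for $\beta_\pm$ pair a column analytic in $\mathbb{C}^+$ with one analytic in $\mathbb{C}^-$, so no analytic continuation off $\mathbb{R}$ is available. Next I would substitute the asymptotics of Lemma~\ref{lemma-1} into these Wronskians: as $|\lambda|\to\infty$, $n_1^{(+)}\to n_1^{+\infty}e_1$ and $n_2^{(-)}\to n_2^{-\infty}e_2$, giving $\alpha \to n_1^{+\infty} n_2^{-\infty} = e^{-\frac{i}{4}\int_{\mathbb{R}}(|u|^2+|v|^2)dy}$ after combining the integrals $\int_{\pm\infty}^{x}$; as $\lambda\to 0$, the limits (\ref{boundary-cond-2}) yield $\det(e_1 + v e_2,\, \bar{u}e_1 + (1+\bar{u}v)e_2) = 1$, whence $\alpha \to (n_1^{+\infty} n_2^{-\infty})^{-1} = e^{\frac{i}{4}\int_{\mathbb{R}}(|u|^2+|v|^2)dy}$.

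For $\beta_\pm$ the two Jost-column vectors appearing in each Wronskian become parallel in both limits (both proportional to $e_j$ as $|\lambda|\to\infty$, and both proportional to the common limit vector from (\ref{boundary-cond-2}) as $\lambda\to 0$), so the Wronskian tends to $0$; since $e^{\pm 2i\theta(\lambda)}$ has modulus one on $\mathbb{R}\setminus\{0\}$, this forces $\beta_\pm(\lambda)\to 0$. The main technical step is the identity $\det n^{(\pm)}=1$, which transforms the scattering relation into Wronskian formulae and underpins the analyticity of $\alpha$; the remaining work is a computational substitution of limits from Lemma~\ref{lemma-1}, with the oscillatory factor $e^{\pm 2i\theta}$ being harmless because it is evaluated only on $\mathbb{R}$.
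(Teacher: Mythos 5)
Your argument is correct and is essentially the standard proof: the paper does not prove Lemma \ref{lemma-2} itself but cites \cite[Lemma 6]{Pelinovsky-FIC-2019}, where the same Wronskian representation $\alpha(\lambda)=\det\bigl(n_1^{(+)}(\lambda),n_2^{(-)}(\lambda)\bigr)$ (equivalently $a(\zeta)=\det(\psi_1^{(+)},\psi_2^{(-)})$, which this paper also uses in Section \ref{sec-3}) together with the limits of Lemma \ref{lemma-1} yields the analyticity of $\alpha$, the two limits \eqref{coefficient-1}--\eqref{coefficient-2}, and the vanishing of $\beta_{\pm}$. The only point worth making explicit is that the limits of Lemma \ref{lemma-1} must also be valid on $\mathbb{R}\setminus\{0\}$ (the boundary of the domains of analyticity) for the $\beta_{\pm}$ conclusion, which holds since the Jost functions are defined and bounded there.
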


The RH problem for the modified Jost functions $n^{(\pm)}(\lambda)$ is constructed as follows.  We first define the sectionally meromorphic matrix $P(\lambda)\in \mathbb C^{2\times2}$ by
\begin{align}
\label{P-matrix}
P(\lambda) := \left\{\begin{aligned}
&\left( n^{(+)}_{1}(\lambda),\frac{ n^{(-)}_{2}(\lambda)}{ \alpha(\lambda)}\right), \quad \lambda \in \mathbb{C}^{+},\\
&\left(\frac{ n^{(-)}_{1}(\lambda)}{\bar{\alpha}(\lambda)}, n^{(+)}_{2}(\lambda)\right), \quad \lambda \in \mathbb{C}^{-}.
\end{aligned}\right.
\end{align}
By using (\ref{boundary-cond-1}) and (\ref{coefficient-1}),
we obtain the following limits as $|\lambda| \to \infty$ in the domain
of meromorphicity of $P(\lambda)$:
\begin{align}\label{p-infty}
\lim_{|\lambda| \to \infty} P(\lambda) = \left(
                           \begin{array}{cc}
                              n^{+\infty}_{1} & 0 \\
                             0 &   n^{+\infty}_{2} \\
                           \end{array}
                         \right) =: P^{\infty},
\end{align}
where $\overline{P^{\infty}} = (P^{\infty})^{-1}$.
We finally define $M(\lambda) := (P^{\infty})^{-1} P(\lambda)$ and
formulate the normalized RH problem.

\vspace{0.2cm}
\centerline{\fbox{\parbox[cs]{1.0\textwidth}{
{\bf RH problem.} Find a complex-valued function $M(\lambda)$ with the following properties:
\begin{itemize}
  \item $ M(\lambda)$ is meromorphic in $\mathbb{C}\setminus\mathbb{R}$.
  \item $ M(\lambda) \to \mathbb I$ as $|\lambda| \to \infty$, where $\mathbb I$ is the $2$-by-$2$ identity matrix.
  \item $ M_{+}(\lambda)= M_{-}(\lambda) V(\lambda)$ for every $\lambda\in\mathbb{R}$, where $M_{\pm}(\lambda) := \lim\limits_{{\rm Im}(\lambda) \to \pm 0} M(\lambda)$ and
  \begin{align*}
 V(\lambda) := \left(\begin{array}{cc}
                   1 & - r_-(\lambda)e^{2i\theta(\lambda)} \\
                   -\overline{r}_+(\lambda) e^{-2i\theta(\lambda)} & 1 +  \overline{r}_+(\lambda) r_-(\lambda)
                 \end{array}\right), \quad
                 r_{\pm}(\lambda) := \frac{\beta_{\pm}(\lambda)}{\alpha(\lambda)}.
\end{align*}
\end{itemize}
}}}

It follows from (\ref{boundary-cond-2}) and (\ref{coefficient-2}) (see also \cite[Proposition 2.24]{He-MTM-slo-res-2023}) that
the potentials $(u,v)$ for solutions of the MTM system (\ref{MTM}) can be recovered from solutions of the RH problem by using the following asymptotic limits taken in the domains of meromorphicity of $M(\lambda)$:
\begin{align}
u = \lim_{\lambda \to 0} \overline{M}_{12}(\lambda), \quad
v = \lim_{\lambda \to 0}  M_{21}(\lambda).
\label{recover-potential}
\end{align}

Solvability of the RH problem under some conditions of the reflection
coefficients $r_{\pm}(\lambda)$ was studied in \cite{He-MTM-slo-res-2023,Pelinovsky-FIC-2019}. In this work,
we consider the reflectionless case $r_{\pm}(\lambda) \equiv 0$ for $\lambda \in \R$ in the particular case when $\alpha(\lambda)$ admits a double pole
at $\lambda_0 \in \mathbb{C}^+$.

It is well-known (see, e.g., \cite{He-MTM-slo-res-2023,LiLiXu-2024}) that a simple pole of $\alpha(\lambda)$ leads to a single-soliton solution. For completeness, we give details of the RH problem with a simple pole in Appendix \ref{app-A}. To simplify the presentation of soliton solutions, we should use the basic symmetries of the MTM system. In particular, the relativistically invariant MTM system (\ref{MTM}) admits the Lorentz symmetry
\begin{equation}
\label{MTM-Lorentz}
\left[ \begin{matrix}
u(x,t) \\
v(x,t)
\end{matrix} \right]  \;\; \mapsto \;\;
\left[ \begin{matrix}
\left( \frac{1 - c}{1 + c} \right)^{1/4} u\left( \frac{x + ct}{\sqrt{1-c^2}}, \frac{t + cx}{\sqrt{1-c^2}} \right) \\
\left( \frac{1 + c}{1 - c} \right)^{1/4} v\left( \frac{x + ct}{\sqrt{1-c^2}}, \frac{t + cx}{\sqrt{1-c^2}} \right)
\end{matrix}\right],
\qquad c \in (-1,1).
\end{equation}
In addition, it admits the translational and rotational symmetries
\begin{equation}
\label{MTM-symm}
\left[ \begin{matrix}
u(x,t) \\
v(x,t)
\end{matrix} \right]  \quad \mapsto \quad
\left[ \begin{matrix}
u(x+x_0,t+t_0) e^{\mathrm{i} \theta_0} \\
v(x+x_0,t+t_0) e^{\mathrm{i} \theta_0}
\end{matrix}\right],
\qquad x_0, t_0, \theta_0 \in \mathbb{R}.
\end{equation}
By using (\ref{MTM-Lorentz}) and (\ref{MTM-symm}), the single-soliton solutions can be expressed in a short form:
\begin{equation}
\label{one_soliton_1}
\left\{
\begin{array}{lr}
u(x,t) = i (\sin \gamma) {\rm sech}\left( x \sin \gamma - i \frac{\gamma}{2}\right) e^{-i t \cos \gamma},\\
v(x,t)=-i (\sin \gamma) {\rm sech}\left( x \sin \gamma + i \frac{\gamma}{2}\right) e^{-i t \cos \gamma},
\end{array}
\right.
\end{equation}
where $\gamma\in(0,\pi)$ is a free parameter. More general single-soliton solutions can be extended with speed parameter $c \in (-1,1)$ by using (\ref{MTM-Lorentz}) and with two translational parameters $x_0,t_0 \in \R$ by using (\ref{MTM-symm}), where translation in $\theta_0$ is linearly dependent from translation in $t_0$.

The normalized single-soliton solution (\ref{one_soliton_1}) corresponds to the simple pole of the RH problem at $\lambda_0 = e^{i \gamma} \in \mathbb{C}^+$ with $\gamma \in (0,\pi)$, see Appendix \ref{app-A}. The double-soliton solutions will also be constructed for $\lambda_0 = e^{i \gamma} \in \mathbb{C}^+$. The following theorem gives the explicit representation of the double-soliton solutions.
As we show in Appendix \ref{app-B}, this representation coincides with the explicit formula obtained by the bilinear Hirota method developed in \cite{Chen-SAPM-2023}.

\begin{thm}\label{explicit-expression-NU-Muv}
	Let $\lambda_0 = e^{i \gamma}$ with $\gamma \in (0,\pi)$ be a double pole of the RH problem.  Then, the solution $(u,v)$ of the MTM system (\ref{MTM}) obtained from (\ref{recover-potential}) is given by
	\begin{align}
	\label{double-soliton}
u = \frac{\bar{N}_u}{\overline{D(M)}}, \quad
v = \frac{N_v}{D(M)},
	\end{align}
	where
\begin{align*}
N_u &= 4i (\sin \gamma)^2 e^{-x \sin \gamma + i t  \cos \gamma - \frac{i \gamma}{2}} \left( (x-\tilde{x}_0) \cos \gamma + i (t-\tilde{t}_0) \sin \gamma + i \right. \\
& \left. \quad
- e^{-2 x \sin \gamma - i \gamma} [ 2 \cot \gamma + (x-\tilde{x}_0) \cos \gamma - i (t-\tilde{t}_0) \sin \gamma] \right),
\end{align*}
\begin{align*}
N_v &= 4i (\sin \gamma)^2 e^{-x \sin \gamma - i t  \cos \gamma - \frac{i \gamma}{2}} \left( (x-\tilde{x}_0)  \cos \gamma - i (t-\tilde{t}_0) \sin \gamma \right. \\
& \left. \quad
- e^{-2 x \sin \gamma - i \gamma} [ 2 \cot \gamma + (x-\tilde{x}_0)  \cos \gamma + i (t-\tilde{t}_0) \sin \gamma + i] \right),
\end{align*}
	and
\begin{align*}
D(M) &= 1 + e^{-4 x \sin \gamma - 2 i \gamma} + 2 e^{-2 x \sin \gamma - i \gamma} \\
& \quad \times
\left( 1 + 2 (\sin \gamma)^2 \left[  \cot \gamma + (x-\tilde{x}_0) \cos \gamma + \frac{i}{2} \right]^2 + 2 (\sin \gamma)^4 \left[t-\tilde{t}_0 + \frac{1}{2 \sin \gamma} \right]^2 \right),
\end{align*}
where $\tilde x_0,\tilde t_0 \in \mathbb{R}$ are arbitrary parameters
in addition to parameters $c \in (-1,1)$ and $x_0,t_0 \in \R$
obtained from the transformations (\ref{MTM-Lorentz}) and (\ref{MTM-symm}).
\end{thm}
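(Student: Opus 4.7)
The plan is to construct $M(\lambda)$ as an explicit rational function by solving the reflectionless RH problem under the double-pole assumption, then apply the recovery formulas (\ref{recover-potential}). Since $r_\pm \equiv 0$ forces $V(\lambda) = \mathbb{I}$ on $\R$, the matrix $M(\lambda)$ extends meromorphically to all of $\C$, and the normalization $M \to \mathbb{I}$ at infinity combined with a double pole at $\lambda_0 \in \C^+$ in column two and at $\bar\lambda_0 \in \C^-$ in column one (the latter forced by the symmetry (\ref{symmetry-psi})) yields the ansatz
\[
M(\lambda) = \mathbb{I} + \frac{A_1}{\lambda-\lambda_0} + \frac{A_2}{(\lambda-\lambda_0)^2} + \frac{B_1}{\lambda-\bar\lambda_0} + \frac{B_2}{(\lambda-\bar\lambda_0)^2},
\]
where the Schwarz symmetry relates the matrices $B_j$ to the $A_j$ by off-diagonal complex conjugation, halving the number of independent unknowns.

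Next I would derive the double-pole residue conditions. Expanding the scattering relation (\ref{n12-S-n12-relation}) together with the Laurent series of $\alpha$ around its double zero $\lambda_0$ introduces two complex norming constants $c_0 \neq 0$ and $c_1$ that are $(x,t)$-independent once the explicit exponential factor $e^{2i\theta(\lambda_0)}$ is peeled off. The principal part of column two of $M$ at $\lambda_0$ is then expressed as a linear combination of $M_1(\lambda_0)$ and $M_1'(\lambda_0)$ weighted by $c_0, c_1$ and $e^{2i\theta(\lambda_0)}$, with analogous conjugate conditions at $\bar\lambda_0$ for column one of $M$. Substituting the rational ansatz and evaluating $M_1, M_1'$ at $\lambda_0$ and $\bar\lambda_0$ converts these into a closed $4\times 4$ linear algebraic system for the scalar entries of $A_1, A_2$, which can be inverted in closed form by a block Cauchy-type computation.

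With $A_1, A_2, B_1, B_2$ determined, I would apply (\ref{recover-potential}) by evaluating $v = \lim_{\lambda \to 0} M_{21}(\lambda)$ and $u = \overline{\lim_{\lambda \to 0} M_{12}(\lambda)}$. On the unit circle $\lambda_0 = e^{i\gamma}$ one has $e^{2i\theta(\lambda_0)} = e^{-x\sin\gamma + it\cos\gamma}$, which is precisely the exponential factor driving $N_u, N_v, D(M)$ in (\ref{double-soliton}). The free scalars encoded in $(c_0, c_1)$ separate into those absorbed by the translation and phase symmetries (\ref{MTM-symm}), producing $(x_0, t_0, \theta_0)$, and two that survive as the genuinely new internal parameters $\tilde x_0, \tilde t_0$ of the double-soliton; the speed parameter $c \in (-1,1)$ of (\ref{MTM-Lorentz}) corresponds to moving $\lambda_0$ off the unit circle and is suppressed here by the normalization $|\lambda_0| = 1$.

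The main obstacle will be the algebraic simplification. The raw outputs of $M(0)$ are quotients of exponential-polynomial expressions with many cross terms, and considerable rearrangement is required to bring them into the compact form (\ref{double-soliton}), in which the $(x,t)$-dependence is organized into the combinations $(x-\tilde x_0)\cos\gamma \pm i(t-\tilde t_0)\sin\gamma$ and the denominator factors neatly. Two independent checks are available: a suitable confluent limit in the norming constants should degenerate the formula to the simple-pole single-soliton (\ref{one_soliton_1}) of Appendix \ref{app-A}, and the derived $D(M)$ should match the Hirota tau-function of the bilinear representation reviewed in Appendix \ref{app-B}, pinning down the numerators uniquely.
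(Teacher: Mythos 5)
Your plan follows essentially the same route as the paper: the rational ansatz with second-order poles at $\lambda_0$ and $\bar\lambda_0$, the double-pole residue conditions expressing the principal parts through $n_1^{(+)}(\lambda_0)$ and $(n_1^{(+)})'(\lambda_0)$ with two norming constants (the paper's $A_0,B_0$), closure to a $4\times 4$ linear system solved by Cramer's rule, and recovery of $(u,v)$ from $M(\lambda)$ at $\lambda=0$ followed by the specialization $\lambda_0=e^{i\gamma}$ and the identification of $\tilde x_0,\tilde t_0$ inside the norming constants. The approach is correct and matches the paper's proof in Section \ref{sec-3}, including the cross-check against the Hirota representation of Appendix \ref{app-B}.
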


\begin{rem}
Parameter $\tilde{t}_0$ is trivially removed by using translational symmetries (\ref{MTM-symm}) with translations in $\theta_0$ and $t_0$. Hence, the double-soliton solutions of Theorem \ref{explicit-expression-NU-Muv} only have two non-trivial parameters: $\gamma \in (0,\pi)$ and $\tilde{x}_0 \in \R$. 	
\end{rem}

Although the explicit form of double-soliton solutions in Theorem \ref{explicit-expression-NU-Muv} can be obtained by algebraic methods such as Darboux transformations or the bilinear Hirota method, see Appendix \ref{app-B}, the RH problem enables us to clarify the Lax spectrum of the double-soliton solutions. Based on the solution in Section \ref{sec-3}, we prove that
$\zeta_0 := \sqrt{\lambda_0} = e^{\frac{i}{2} \gamma}$ is a double eigenvalue
of the linear system (\ref{Lax-pair}) with only one eigenvector $\psi_0 \in H^1(\R,\C^2)$ and one generalized eigenvector $\psi_1 \in H^1(\R,\C^2)$. The eigenvector and the generalized eigenvector satisfy the following linear equations:
\begin{align}
\label{eigenvector}
\partial_x \psi_0 =  L(u,v,\zeta_0) \psi_0, \quad \partial_t \psi_0 = A(u,v,\zeta_0) \psi_0
\end{align}
and
\begin{align}
\label{eigenvector-generalized}
\partial_x \psi_1 =  L(u,v,\zeta_0) \psi_1 + \partial_{\zeta} L(u,v,\zeta_0) \psi_0, \quad \partial_t \psi_1 = A(u,v,\zeta_0) \psi_1 + \partial_{\zeta} A(u,v,\zeta_0) \psi_0,
\end{align}
where $(u,v)$ is given by (\ref{double-soliton}) and $\zeta_0 = e^{\frac{i \gamma}{2}}$.

The knowledge of eigenvectors and generalized eigenvectors in (\ref{eigenvector}) and (\ref{eigenvector-generalized}) is particularly important when the exponential double-soliton solution of Theorem \ref{explicit-expression-NU-Muv} converges as $\gamma \to \pi$ to the algebraic double-soliton solution obtained in \cite{Han-2024}. The following theorem states that the corresponding Lax spectrum includes the double embedded eigenvalue $\zeta_0 = i$ of the linear system (\ref{Lax-pair}) with only one eigenvector $\psi_0 \in H^1(\R,\C^2)$ and one generalized eigenvector $\psi_1 \in H^1(\R,\C^2)$ satisfying (\ref{eigenvector}) and (\ref{eigenvector-generalized}) for $\zeta_0 = i$.

\begin{thm}\label{Theorem-algebra-eigen-general}
		Let $\lambda_0 = e^{i \gamma}$ be a double pole of the RH problem with the solution $(u,v)$ of the MTM system (\ref{MTM}) obtained in Theorem \ref{explicit-expression-NU-Muv}. With proper choice of
		$\tilde x_0$ and $\tilde{t}_0$, this solution transforms in the limit $\gamma \to \pi$ to the form:
		\begin{align}\label{u-algebric-soliton}
		u_{\rm alg}(x,t) = \frac{-\frac{8}{3} x^3 - 4 i x^2 + 2x - i - 4 i (t - \tilde{t}_0) (i + 2 x) + 8 \tilde{x}_0} {\frac{4}{3}x^4 +\frac{8}{3} ix^3 + 2x^2 + 2 i x -\frac{1}{4} - 4 (t - \tilde{t}_0)^2 + 4 \tilde{x}_0 (i + 2x)} e^{it}
		\end{align}
		and
		\begin{align}\label{v-algebric-soliton}
	v_{\rm alg}(x,t) = \frac{-\frac{8}{3} x^3 + 4 i x^2 + 2x + i + 4 i (t - \tilde{t}_0) (i - 2 x) + 8 \tilde{x}_0}{\frac{4}{3}x^4 - \frac{8}{3} ix^3 + 2x^2 - 2 i x -\frac{1}{4} - 4 (t - \tilde{t}_0)^2 - 4 \tilde{x}_0 (i - 2x)} e^{it},
	\end{align}
	where $\tilde x_0,\tilde t_0 \in \mathbb{R}$ are (new) arbitrary parameters
	in addition to parameters $c \in (-1,1)$ and $x_0,t_0 \in \R$
	obtained from the transformations (\ref{MTM-Lorentz}) and (\ref{MTM-symm}).
The linear equations (\ref{eigenvector}) and (\ref{eigenvector-generalized}) with $(u,v) = (u_{\rm alg},v_{\rm alg})$ and $\zeta_0 = i$ admit the eigenvector
 		\begin{align}\label{calim-eigenvector-expression}
\psi_0 = e^{-\frac{i}{2} t} T^{-1} \mathfrak{n}_0,
	\end{align}
	where $T^{-1} := \left[ T(v_{\rm alg},i) \right]^{-1} = \left(
	\begin{array}{cc}
		1 & 0 \\
		i v_{alg} & -i \\
	\end{array}
	\right)$ and $\mathfrak{n}_0 = (\mathfrak{n}_{01},\mathfrak{n}_{02})^T$ is given by
	\begin{align*}
\mathfrak{n}_{01} &= e^{\frac{i}{4}\int_{+\infty}^{x}(|u|^2+|v|^2) dy}\frac{2x^2-2i (t - \tilde{t}_0) +\frac{1}{2}} {\frac{4}{3} x^4 -\frac{8}{3}  i x^3  + 2 x^2   - 2 i x -\frac{1}{4} - 4(t - \tilde{t}_0)^2 - 4 \tilde{x}_0 (i - 2 x)},\\
\mathfrak{n}_{02} &= e^{-\frac{i}{4}\int_{+\infty}^{x}(|u|^2+|v|^2) dy + i t} \frac{-2ix^2- 4 x+2 (t - \tilde{t}_0) +\frac{3}{2}i} {\frac{4}{3} x^4 -\frac{8}{3}  i x^3  + 2 x^2   - 2 i x -\frac{1}{4} - 4(t - \tilde{t}_0)^2 - 4 \tilde{x}_0 (i - 2 x)}e^{it}.
	\end{align*}
 	and the generalized eigenvector
 \begin{align}\label{calim-generalized-eigenvector-expression}
 \psi_1 = 2 i e^{-\frac{i}{2} t} T^{-1} \mathfrak{n}_1+i\partial_{\zeta} T(v,\zeta) \psi_0,
 \end{align}
	where $\partial_{\zeta} T(v,\zeta)=\left(
                                          \begin{array}{cc}
                                            0 & 0 \\
                                            0 & 1 \\
                                          \end{array}
                                        \right)
$ and $\mathfrak{n}_1 = (\mathfrak{n}_{11},\mathfrak{n}_{12})^T$ is given by
\begin{align*}
\mathfrak{n}_{11} = &e^{\frac{i}{4}\int_{+\infty}^{x}(|u|^2+|v|^2) dy}
\frac{ -\frac{1}{3}i x^3-2x^2 - \frac{4}{3}i x+ x (t-\tilde{t}_0) +2i (t-\tilde{t}_0) -\frac{1}{2}- 2i \tilde x_0} {\frac{4}{3} x^4 -\frac{8}{3}  i x^3  + 2 x^2   - 2 i x -\frac{1}{4} - 4(t - \tilde{t}_0)^2 - 4 \tilde{x}_0 (i - 2 x)}, \\
\mathfrak{n}_{12} = &e^{-\frac{i}{4}\int_{+\infty}^{x}(|u|^2+|v|^2) dy + it}\frac{- \frac{1}{3}x^3-ix^2- \frac{15}{4}x +ix (t-\tilde{t}_0) + \frac{ 5}{4}i +3 (t-\tilde{t}_0)  - 2 \tilde x_0 } {\frac{4}{3} x^4 -\frac{8}{3}  i x^3  + 2 x^2   - 2 i x -\frac{1}{4} - 4(t - \tilde{t}_0)^2 - 4 \tilde{x}_0 (i - 2 x)}e^{it}.
	\end{align*}
	\end{thm}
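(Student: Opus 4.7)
The plan is to verify Theorem \ref{Theorem-algebra-eigen-general} in three steps: (i) take the singular limit $\gamma \to \pi$ of the exponential double-soliton from Theorem \ref{explicit-expression-NU-Muv} after an appropriate rescaling of the parameters $(\tilde x_0,\tilde t_0)$; (ii) take the corresponding limit of the modified Jost functions constructed in Section \ref{sec-3} to extract the eigenvector $\psi_0$ and the generalized eigenvector $\psi_1$ at the embedded point $\zeta_0 = i$; and (iii) confirm the formulas (\ref{calim-eigenvector-expression}) and (\ref{calim-generalized-eigenvector-expression}) by direct substitution into the linear equations (\ref{eigenvector}) and (\ref{eigenvector-generalized}). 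The difficulty is concentrated in step (i), since $\sin\gamma \to 0$ forces several terms in (\ref{double-soliton}) to degenerate, and a nontrivial algebraic limit can only be extracted after the $1/\sin\gamma$ singularities appearing through $\cot\gamma$ and $1/(2\sin\gamma)$ have been absorbed into the parameters.

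For step (i), set $\epsilon := \pi - \gamma$, so that $\sin\gamma = \sin\epsilon$, $\cos\gamma = -\cos\epsilon$, and $e^{-2x\sin\gamma - i\gamma} = -e^{-2x\sin\epsilon + i\epsilon}$. The plan is to substitute the rescaling $\tilde x_0^{\rm old} = \epsilon^{-1} + \tilde x_0^{\rm new} + O(\epsilon)$ and $\tilde t_0^{\rm old} = -\tfrac12 \epsilon^{-1} + \tilde t_0^{\rm new} + O(\epsilon)$, where the leading coefficients are dictated by the cancellation of $\cot\gamma + (x-\tilde x_0)\cos\gamma$ and $(t-\tilde t_0) + 1/(2\sin\gamma)$ at $O(\epsilon^{-1})$. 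With this rescaling, $D(M)$ decomposes as $(1 + e^{-2x\sin\gamma - i\gamma})^2 + 2 e^{-2x\sin\gamma - i\gamma} \cdot (\text{stuff})$, in which the leading $O(1)$ terms cancel and the next nontrivial orders combine to the quartic polynomial in $x$ and the quadratic in $t-\tilde t_0^{\rm new}$ of the denominator of (\ref{u-algebric-soliton})--(\ref{v-algebric-soliton}). The overall phase factors $e^{\mp i t \cos\gamma - i\gamma/2}$ tend to $e^{\pm i t}$ up to a constant phase that is absorbed into the rotational symmetry (\ref{MTM-symm}).

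For step (ii), the explicit formula for the reflectionless modified Jost function $n_1^{(+)}(\lambda)$ with a double pole at $\lambda_0 = e^{i\gamma}$, obtained in Section \ref{sec-3}, is subjected to the same rescaling in $\epsilon$. Inverting the transformation (\ref{Transformation-1}) at $\zeta_0 = e^{i\gamma/2} \to i$ and noting that the Jost phase $e^{-\frac{i}{4}(\zeta^2 - \zeta^{-2})x - \frac{i}{4}(\zeta^2 + \zeta^{-2})t}$ reduces to $e^{-it/2}$ at $\zeta_0 = i$, the leading surviving column of the rescaled $n_1^{(+)}$ yields precisely the formula $\mathfrak n_0$ in (\ref{calim-eigenvector-expression}). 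For the generalized eigenvector, differentiate the Jost function ansatz with respect to $\zeta$ at the double pole and apply the product rule on $T(v,\zeta)^{-1}$: the $\zeta$-derivative of $T$ produces the additive contribution $i \partial_\zeta T(v,\zeta) \psi_0$ visible in (\ref{calim-generalized-eigenvector-expression}), while the derivative of the inner column, rescaled as above, gives $\mathfrak n_1$. Step (iii) then reduces to a componentwise substitution of $(u_{\rm alg}, v_{\rm alg})$ and the stated $(\psi_0,\psi_1)$ into $L(u,v,i)$ and $A(u,v,i)$, which is mechanical thanks to the polynomial structure.

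The main obstacle is the bookkeeping of the singular expansion: the rescaling of $\tilde x_0^{\rm old}$ and $\tilde t_0^{\rm old}$ must be tuned to several orders in $\epsilon$ so that the leading terms cancel simultaneously in $D(M)$, $N_u$, $N_v$, and the Jost functions, and the $O(1)$ remainders reassemble into the exact polynomials in (\ref{u-algebric-soliton})--(\ref{v-algebric-soliton}) and (\ref{calim-eigenvector-expression})--(\ref{calim-generalized-eigenvector-expression}). A secondary but important check is that $\psi_0, \psi_1 \in H^1(\R,\C^2)$: the common denominator is a positive polynomial of degree four in $x$ that does not vanish on the real line for any choice of $\tilde x_0, \tilde t_0 \in \R$ (which must be verified via its discriminant), so the components of $\mathfrak n_0$ and $\mathfrak n_1$ decay like $|x|^{-2}$, giving $L^2$ decay of both the functions and their first derivatives.
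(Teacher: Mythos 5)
Your overall strategy --- renormalize $(\tilde x_0,\tilde t_0)$, expand in $\epsilon=\pi-\gamma$, and pass to the limit both in the solution formulas and in the Jost-function representation of the eigenvector and generalized eigenvector --- is the same as the paper's, but two essential pieces of the renormalization are missing or wrong, and each causes the construction to fail as written. First, the parameter tuning: the correct substitution is $\tilde x_0 \mapsto \tilde x_0 + \frac{1}{\sin\gamma}$ and $\tilde t_0\mapsto \tilde t_0 + \frac{1}{2\sin\gamma}$ (note the sign: your $-\tfrac12\epsilon^{-1}$ does not cancel $t-\tilde t_0+\frac{1}{2\sin\gamma}$), \emph{followed by the additional rescaling} $\tilde x_0\to\tilde x_0\,\epsilon^2$. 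This second rescaling is not optional bookkeeping. After the $1/\sin\gamma$ shift, the expansion of $D(M)$ still has nonzero coefficients at orders $\epsilon^2$ and $\epsilon^3$, both proportional to $\tilde x_0$ (e.g.\ $-4\tilde x_0(i-2x+\tilde x_0)$ at order $\epsilon^2$), and similarly for $N_u$, $N_v$ (e.g.\ $8\tilde x_0$ at order $\epsilon^2$). With your ansatz $\tilde x_0^{\rm old}=\epsilon^{-1}+\tilde x_0^{\rm new}+O(\epsilon)$ the new parameter sits at order one, these lower-order terms dominate, and $\bar N_u/\overline{D(M)}$ collapses to a first-order rational profile (essentially a translated algebraic single-soliton) rather than the quartic denominators of (\ref{u-algebric-soliton})--(\ref{v-algebric-soliton}). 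Only by pushing $\tilde x_0^{\rm new}$ to order $\epsilon^2$ do the $\epsilon^2$- and $\epsilon^3$-coefficients vanish and the free parameter survive at order $\epsilon^4$ alongside the polynomial terms.

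Second, the generalized eigenvector. The limit of $(\psi_1^{(+)})'(\zeta_0)$ by itself does not exist: its representation through (\ref{Gener-Eigen-explicite-expression}) and (\ref{generalized-eigen-expre}) carries the factor $2(\lambda_0-\bar\lambda_0)^{-1}=(i\sin\gamma)^{-1}$, which diverges as $\gamma\to\pi$. The paper first forms the combination $(\psi_1^{(+)})'(\zeta_0)-2(\lambda_0-\bar\lambda_0)^{-1}\psi_1^{(+)}(\zeta_0)$ --- legitimate because a generalized eigenvector is defined only modulo addition of a multiple of the eigenvector, which leaves (\ref{eigenvector-generalized}) invariant --- and only then multiplies by $\epsilon^2$ and lets $\epsilon\to0$ to obtain $\psi_1$ in (\ref{calim-generalized-eigenvector-expression}); the same $\epsilon^2$ normalization is needed for $\psi_0$, since $D(M)=O(\epsilon^4)$ while the numerators in (\ref{eigen-expre}) are $O(\epsilon^2)$. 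Your step (ii) differentiates and passes to the limit directly, which fails without this regularization. Your fallback step (iii) (direct substitution into (\ref{eigenvector}) and (\ref{eigenvector-generalized})) would certify the stated formulas once you have them, but it cannot produce them, and the paper does not need it because the limiting objects inherit the linear equations from $\gamma<\pi$.
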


\begin{rem}
The eigenvector $\psi_0$ and generalized eigenvector $\psi_1$ in (\ref{calim-eigenvector-expression}) and (\ref{calim-generalized-eigenvector-expression})
for the double embedded eigenvalue $\zeta = i$ satisfy the criterion for the spatial decay in \cite[Lemma 6.4]{Klaus-JNS-2006}, namely
$\psi_0 = \mathcal{O}(|x|^{-2})$ and $\psi_1 = \mathcal{O}(|x|^{-1})$ as $|x| \to \infty$.
\end{rem}

\begin{rem}
	The algebraic double-soliton given by  (\ref{u-algebric-soliton}) and (\ref{v-algebric-soliton}) reduces
to the explicit expression obtained in \cite{Han-2024} by using the transformation
\begin{align*}
x\to -x,\quad  t\to-t, \quad  u\to u,\quad  v\to-v,
\end{align*}
due to a different normalization of the MTM system used in \cite{Han-2024}.
\end{rem}

\section{Exponential double-solitons for a double pole}
\label{sec-3}

Here we study solutions of the normalized RH problem for the refelectionless potential $r_{\pm}(\lambda) \equiv 0$ for $\lambda \in \R$ with a double pole of $M(\lambda)$ at $\lambda_0 \in \mathbb{C}^+$. By symmetry (\ref{symmetry-psi}),
$\bar{\lambda}_0 \in \mathbb{C}^-$ is also a double pole of $M(\lambda)$. The normalized RH problem can be rewritten in the form:

\vspace{0.2cm}

\centerline{\fbox{\parbox[cs]{1.0\textwidth}{
			{\bf RH problem.} Find a complex-valued function $M(\lambda)$ with the following properties:
			\begin{itemize}
			\item $ M(\lambda)$ has double poles at $\lambda_0 \in \mathbb{C}^+$ and $\bar{\lambda}_0 \in \mathbb{C}^-$.
			\item $ M(\lambda) \to \mathbb I$ as $|\lambda| \to \infty$,
			where $\mathbb I$ is the $2$-by-$2$ identity matrix.
			\item $ M_{+}(\lambda)= M_{-}(\lambda) $ for every $\lambda\in\mathbb{R}$, where $M_{\pm}(\lambda) := \lim\limits_{{\rm Im}(\lambda) \to \pm 0} M(\lambda)$.
			\end{itemize}
}}}

\vspace{0.2cm}

In order to  regularize the RH problem, we   subtract the residue conditions in both sides of the formula $ M_{+}(\lambda)= M_{-}(\lambda)$ and obtain the following solution of the normalized RH problem:
\begin{align}\label{RH-solution-double}
    M(\lambda)=\mathbb I+\frac{\mathrm{Res}_{\lambda=\lambda_0}  M_{+}(\lambda)}{\lambda-\lambda_0} +\frac{\mathrm{Res}_{\lambda=\bar\lambda_0} M_{-}(\lambda)}{\lambda-\bar\lambda_0}
  +\frac{\mathrm{P}^{-2}_{\lambda=\lambda_0} M_{+}(\lambda)}{(\lambda-\lambda_0)^2} +\frac{\mathrm{P}^{-2}_{\lambda=\bar\lambda_0}  M_{-}(\lambda)}{(\lambda-\bar\lambda_0)^2},
\end{align}
where $\mathrm{Res}_{\lambda=\lambda_0}$ is the residue coefficient and
$\mathrm{P}^{-2}_{\lambda=\lambda_0}$ is the double pole coefficient at $\lambda = \lambda_0$.

\subsection{Computations of the residue coefficients}

In order to compute the residue coefficients, we use the following result.

\begin{lemma}\label{residue-condition-theorem}
Assume $f$ and $g$ be analytic in a complex region $\Omega\in\mathbb{C}$
such that $g$ has a double pole at $z_{0}\in\Omega$ with $g(z_{0})=g'(z_{0})=0$, $g''(z_{0})\neq0$, and $f(z_{0})\neq0$. The residue coefficients of $f/g$
at $z = z_0$ are given by
\begin{align*}
\mathrm{Res}_{z=z_{0}} \frac{f(z)}{g(z)} = \frac{2f'(z_{0})}{g''(z_{0})} -
\frac{2f(z_{0})g'''(z_{0})}{3 [g''(z_{0})]^{2}},\quad
\mathrm{P}^{-2}_{z=z_{0}} \frac{f(z)}{g(z)} = \frac{2f(z_{0})}{g''(z_{0})}.
\end{align*}
\end{lemma}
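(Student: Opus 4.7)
The plan is to prove both formulas by a direct Taylor-series computation of the Laurent expansion of $f(z)/g(z)$ around $z_0$, reading off the coefficients of $(z-z_0)^{-2}$ and $(z-z_0)^{-1}$. Since $g(z_0)=g'(z_0)=0$ and $g''(z_0)\neq 0$, the function $g$ has a zero of exactly order two at $z_0$, so I would start from
\[
g(z) = \tfrac{1}{2} g''(z_0)(z-z_0)^2 + \tfrac{1}{6} g'''(z_0)(z-z_0)^3 + \mathcal{O}\bigl((z-z_0)^4\bigr),
\]
factor out the leading monomial $\tfrac{1}{2}g''(z_0)(z-z_0)^2$, and invert the remaining bracket using the geometric expansion $(1+w)^{-1}=1-w+\mathcal{O}(w^2)$. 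This yields
\[
\frac{1}{g(z)} = \frac{2}{g''(z_0)(z-z_0)^2} - \frac{2\,g'''(z_0)}{3\,[g''(z_0)]^2 \, (z-z_0)} + \mathcal{O}(1).
\]

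Next I would Taylor-expand $f(z) = f(z_0) + f'(z_0)(z-z_0) + \mathcal{O}((z-z_0)^2)$ and multiply by the expansion above, keeping track of the two singular powers. The coefficient of $(z-z_0)^{-2}$ receives a single contribution, namely $f(z_0)$ multiplied by the leading part of $1/g(z)$, producing the claimed $2f(z_0)/g''(z_0)$. The coefficient of $(z-z_0)^{-1}$ receives two contributions: $f'(z_0)$ times the $(z-z_0)^{-2}$ part of $1/g(z)$, plus $f(z_0)$ times the $(z-z_0)^{-1}$ part of $1/g(z)$; summing them gives exactly $\dfrac{2f'(z_0)}{g''(z_0)} - \dfrac{2f(z_0)g'''(z_0)}{3\,[g''(z_0)]^2}$, which is the residue formula.

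The argument is entirely mechanical and has no genuine obstacle; the only point requiring care is bookkeeping the factor $1/3$ in the subleading term of $1/g(z)$, which arises from the ratio $\bigl(g'''(z_0)/6\bigr)/\bigl(g''(z_0)/2\bigr) = g'''(z_0)/\bigl(3g''(z_0)\bigr)$ and should not be confused with $1/2$. The hypothesis $f(z_0)\neq 0$ guarantees that the $(z-z_0)^{-2}$ coefficient is nonzero, so the principal-part decomposition used in the representation (\ref{RH-solution-double}) indeed genuinely captures a double pole at $z_0$.
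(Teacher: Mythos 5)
Your proof is correct and follows essentially the same route as the paper's: the paper also writes the Taylor expansions of $f$ and $g$ about $z_0$ and reads off the Laurent coefficients, though it leaves the geometric-series inversion of $g$ implicit. Your bookkeeping of the $1/3$ factor in the subleading term of $1/g(z)$ is accurate and matches the stated formulas.
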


\begin{proof}
	Under conditions of the lemma, we have
	\begin{align*}
	f(z) &= f(z_0) + f'(z_0) (z-z_0) + \mathcal{O}((z-z_0)^2), \\
	g(z) &= \frac{1}{2} g''(z_0) (z-z_0)^2 + \frac{1}{6} g'''(z_0) (z-z_0)^3 + \mathcal{O}((z-z_0)^4),
	\end{align*}
from which the result follows by the Laurent expansion of $f(z)/g(z)$.
\end{proof}

The residue coefficients in (\ref{RH-solution-double}) are obtained
from (\ref{P-matrix}) and (\ref{p-infty}):
\begin{align*}
\mathrm{Res}_{\lambda=\lambda_0}  M_{+}(\lambda)= (P^{\infty})^{-1} \left( \vec{\textbf{0}} \quad \mathrm{Res}_{\lambda=\lambda_0} \frac{  n^{(-)}_{2}(\lambda)}{\tilde \alpha(\lambda)}\right),  &\quad \mathrm{P}^{-2}_{\lambda=\lambda_0}  M_{+}(\lambda) = (P^{\infty})^{-1} \left(\vec{\textbf{0}}\quad \mathrm{P}^{-2}_{\lambda=\lambda_0}\frac{  n^{(-)}_{2}(\lambda)}{ \alpha(\lambda)}\right), \\
\mathrm{Res}_{\lambda=\bar\lambda_0} \tilde M_{-}(\lambda) = (P^{\infty})^{-1} \left(\mathrm{Res}_{\lambda=\bar\lambda_0} \frac{  n^{(-)}_{1}(\lambda)}{
	\bar{\alpha}(\lambda)} \quad  \vec{\textbf{0}}\right), &\quad \mathrm{P}^{-2}_{\lambda=\bar\lambda_0}  M_{-}(\lambda) = (P^{\infty})^{-1} \left(\mathrm{P}^{-2}_{\lambda=\bar\lambda_0} \frac{ n^{(-)}_{1}(\lambda)}{\check{ \alpha }(\lambda)}\quad \vec{\textbf{0}}\right),
\end{align*}
where $\vec{\textbf{0}}$ is a $2$-by-$1$ zero vector. Based on Lemma \ref{residue-condition-theorem} and this representation, we obtain the residue coefficients in the following proposition.

\begin{prop}
The residue coefficients of $M_+(\lambda)$ at $\lambda = \lambda_0$ are given by
\begin{align}
\mathrm{P}^{-2}_{\lambda=\lambda_{0}}\left[\frac{ n^{(-)}_{2}(\lambda)}{ \alpha(\lambda)}\right]
&=  A_0 n^{(+)}_{1}(\lambda_0)e^{2i\theta(\lambda_0)},\label{P-2-lambda-0-1}\\
\mathrm{Res}_{\lambda=\lambda_{0}}\left[\frac{ n^{(-)}_{2}(\lambda)}{ \alpha (\lambda)}\right]
&= A_0e^{2i\theta(\lambda_0)}\left[ (n^{(+)}_{1})'(\lambda_0) + n^{(+)}_{1}(\lambda_0)\left(2i  \theta'(\lambda_0) +B_0\right)\right],\label{Res-lambda-0-1}
\end{align}
where $A_0$ and $B_0$ are arbitrary coefficients.
The residue conditions of $M_-(\lambda)$ at $\lambda = \bar\lambda_0$ are given by
\begin{align}
\mathrm{P}^{-2}_{\lambda=\bar\lambda_{0}} \left[\frac{ n^{(-)}_{1}(\lambda)}{\check{ \alpha }(\lambda)}\right]
&= - \bar{A}_0 \bar\lambda_0   n^{(+)}_{2}(\bar\lambda_0)e^{-2i\theta(\bar\lambda_0)},\label{P-2-bar-lambda-0-1}\\
\mathrm{Res}_{\lambda=\bar\lambda_{0}} \left[\frac{  n^{(-)}_{1}(\lambda)}{\bar{ \alpha }(\bar\lambda)}\right]
&= -\bar{A}_0 \bar\lambda_0e^{-2i\theta(\bar\lambda_0)}\left[   (n^{(+)}_{2})'(\bar\lambda_0) +  n^{(+)}_{2}(\bar\lambda_0)\left(-2i  \theta'(\bar\lambda_0) + \bar{B}_0 +\bar\lambda_0^{-1} \right)\right].
\label{Res-bar-lambda-0-1}
\end{align}
\end{prop}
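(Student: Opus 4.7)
The plan is to reduce the proposition to a direct application of Lemma \ref{residue-condition-theorem}, with the substantive work being the parameterization of the Taylor coefficients of $n_2^{(-)}$ at $\lambda_0$ via norming constants and the transfer of the result to $\bar\lambda_0$ via symmetry.

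First I would apply Lemma \ref{residue-condition-theorem} with $f(\lambda) = n_2^{(-)}(\lambda)$ and $g(\lambda) = \alpha(\lambda)$, both analytic in $\mathbb{C}^+$ by Lemmas \ref{lemma-1} and \ref{lemma-2}. The hypothesis that $\lambda_0$ is a double pole of $M(\lambda)$ means $\alpha(\lambda_0) = \alpha'(\lambda_0) = 0$ with $\alpha''(\lambda_0) \neq 0$, so the lemma yields
\begin{align*}
\mathrm{P}^{-2}_{\lambda = \lambda_0}\!\left[\frac{n_2^{(-)}(\lambda)}{\alpha(\lambda)}\right]
= \frac{2\, n_2^{(-)}(\lambda_0)}{\alpha''(\lambda_0)},
\qquad
\mathrm{Res}_{\lambda = \lambda_0}\!\left[\frac{n_2^{(-)}(\lambda)}{\alpha(\lambda)}\right]
= \frac{2 (n_2^{(-)})'(\lambda_0)}{\alpha''(\lambda_0)} - \frac{2 n_2^{(-)}(\lambda_0)\, \alpha'''(\lambda_0)}{3[\alpha''(\lambda_0)]^2},
\end{align*}
so the task reduces to parameterizing $n_2^{(-)}(\lambda_0)$ and $(n_2^{(-)})'(\lambda_0)$ in terms of the discrete data at the pole.

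In the reflectionless case with $r_\pm \equiv 0$, the scattering relation (\ref{n12-S-n12-relation}) reduces on $\mathbb{R}$ to $n_2^{(-)}(\lambda) = n_2^{(+)}(\lambda) \alpha(\lambda)$; together with the Wronskian identity for the Lax system (\ref{Lax-pair}), this forces $n_2^{(-)}(\lambda) e^{-2 i \theta(\lambda)}$ and $n_1^{(+)}(\lambda)$ to agree at $\lambda_0$ up to order $(\lambda-\lambda_0)^2$, reflecting the Jordan-block structure at the double zero of $\alpha$ (cf.\ the generalized eigenvector of Theorem \ref{Theorem-algebra-eigen-general}). Introducing two complex norming constants $c_0, c_1 \in \mathbb{C}$ via
\begin{align*}
n_2^{(-)}(\lambda_0) &= c_0\, e^{2i\theta(\lambda_0)}\, n_1^{(+)}(\lambda_0), \\
\left.\partial_\lambda\!\left[ n_2^{(-)}(\lambda)\, e^{-2i\theta(\lambda)} \right]\right|_{\lambda = \lambda_0}
&= c_0\, (n_1^{(+)})'(\lambda_0) + c_1\, n_1^{(+)}(\lambda_0),
\end{align*}
I would differentiate the product in the second line to isolate $(n_2^{(-)})'(\lambda_0) = c_0 e^{2i\theta(\lambda_0)} [(n_1^{(+)})'(\lambda_0) + n_1^{(+)}(\lambda_0)(2 i \theta'(\lambda_0) + c_1/c_0)]$, substitute into the formulas from Step 1, and introduce the renamings
\begin{align*}
A_0 := \frac{2 c_0}{\alpha''(\lambda_0)}, \qquad
B_0 := \frac{c_1}{c_0} - \frac{\alpha'''(\lambda_0)}{3\,\alpha''(\lambda_0)}.
\end{align*}
A short algebraic manipulation then produces exactly (\ref{P-2-lambda-0-1}) and (\ref{Res-lambda-0-1}); since $(c_0, c_1) \in \mathbb{C}^2$ is unconstrained, so is $(A_0, B_0)$, which justifies calling them arbitrary coefficients.

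Finally, the formulas at $\bar\lambda_0$ follow by transferring the above computation under the Jost-function symmetry (\ref{symmetry-psi}) combined with the Schwarz reflection $\bar\alpha(\lambda) := \overline{\alpha(\bar\lambda)}$, which has a double zero at $\bar\lambda_0 \in \mathbb{C}^-$. The first column of (\ref{n12-S-n12-relation}) in the reflectionless case gives $n_1^{(-)}(\lambda) = n_1^{(+)}(\lambda) \bar\alpha(\lambda)$ on $\mathbb{R}$, and at $\bar\lambda_0$ the function $n_1^{(-)}$ stands in the same relation to $n_2^{(+)}$ that $n_2^{(-)}$ stood to $n_1^{(+)}$ at $\lambda_0$. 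Applying Lemma \ref{residue-condition-theorem} to $n_1^{(-)}(\lambda)/\bar\alpha(\lambda)$ with conjugated norming constants produces (\ref{P-2-bar-lambda-0-1}) and (\ref{Res-bar-lambda-0-1}): the overall minus sign comes from the $-\bar\beta_+$ entry of the scattering matrix in (\ref{n12-S-n12-relation}), while the prefactor $\bar\lambda_0$ and the extra $\bar\lambda_0^{-1}$ term inside the bracket of (\ref{Res-bar-lambda-0-1}) arise because only $n_2^{(\pm)}$ carries the $\zeta^{-1} = \lambda^{-1/2}$ weight in the definition (\ref{Transformation-1}), so the asymmetry propagates to both the leading coefficient and (via $\partial_\lambda \lambda^{-1/2}$) to its $\lambda$-derivative. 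The main obstacle is precisely this bookkeeping at $\bar\lambda_0$: tracking the $\zeta^{-1}$ normalization through the Schwarz-conjugated residue calculation is what pins down the specific prefactors $-\bar A_0 \bar\lambda_0$ and the additional $\bar\lambda_0^{-1}$ term, whereas the parameterization at $\lambda_0$ itself is a routine one-order extension of the simple-pole argument of Appendix \ref{app-A}.
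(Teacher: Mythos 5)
Your overall architecture matches the paper's: apply Lemma \ref{residue-condition-theorem} to $n_2^{(-)}/\alpha$, parameterize the value and first derivative of the numerator at the double zero by two norming constants, absorb everything into $A_0$ and $B_0$, and transfer to $\bar\lambda_0$ by the symmetry (\ref{symmetry-psi}). However, there are two places where what you have is a plan rather than a proof, and one of them rests on a justification that does not work.

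First, the central step is to show that $n_2^{(-)}(\lambda_0)$ is proportional to $n_1^{(+)}(\lambda_0)e^{2i\theta(\lambda_0)}$ \emph{and} that $\partial_\lambda\bigl[n_2^{(-)}e^{-2i\theta}\bigr]\big|_{\lambda_0}$ lies in the span of $(n_1^{(+)})'(\lambda_0)$ and $n_1^{(+)}(\lambda_0)$ with the coefficient of $(n_1^{(+)})'(\lambda_0)$ equal to the \emph{same} constant $c_0$. You attribute this to the reflectionless scattering relation $n_2^{(-)}=\alpha\, n_2^{(+)}$ on $\mathbb{R}$ ``together with the Wronskian identity.'' The scattering relation cannot be evaluated at $\lambda_0\in\mathbb{C}^+$ because $n_2^{(+)}$ is analytic only in $\mathbb{C}^-$, and reflectionlessness is irrelevant here. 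The correct derivation, which the paper gives, uses only $\alpha(\lambda)=a(\zeta)=\det\bigl(\psi_1^{(+)}(\zeta),\psi_2^{(-)}(\zeta)\bigr)$: the condition $a(\zeta_0)=0$ forces linear dependence of the columns (constant $e_0$), and the condition $a'(\zeta_0)=0$, expanded as $\det\bigl(\psi_1^{(+)},-e_0(\psi_1^{(+)})'+(\psi_2^{(-)})'\bigr)\big|_{\zeta_0}=0$, forces the derivative relation with a second constant $h_0$. You name the Wronskian but never perform this expansion, and without it the Jordan-block structure you invoke is an assumption, not a consequence.

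Second, you correctly identify the $\bar\lambda_0$ bookkeeping as the delicate part, but you only describe where the prefactor $-\bar A_0\bar\lambda_0$ and the extra $\bar\lambda_0^{-1}$ should come from; you do not compute them. The subtlety is sharper than your description suggests: at $\lambda_0$ the $\zeta^{-1}$ weight sits in the numerator $n_2^{(-)}$, and its $\lambda$-differentiation produces a term that \emph{exactly cancels} against the $-3\zeta^{-1}a''(\zeta)$ contribution in the chain-rule expression for $\alpha'''(\lambda_0)$ --- this is why (\ref{Res-lambda-0-1}) has no $\lambda_0^{-1}$ term. At $\bar\lambda_0$ the weight sits instead in $n_2^{(+)}$ on the right-hand side, the two contributions add rather than cancel, and the surviving $\bar\lambda_0^{-1}$ appears in (\ref{Res-bar-lambda-0-1}). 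This cancellation/non-cancellation pattern is exactly the kind of thing that must be verified by the explicit chain-rule computation (as in the paper's formulas for $\alpha'',\alpha'''$ in terms of $a'',a'''$), since getting it wrong shifts $B_0$ by a $\lambda_0$-dependent amount and breaks the consistency of using the conjugate pair $(\bar A_0,\bar B_0)$ at $\bar\lambda_0$. Also, the minus sign at $\bar\lambda_0$ comes from the antisymmetric matrices in the symmetry (\ref{symmetry-psi}), not from the $-\bar\beta_+$ entry of the scattering matrix, which vanishes identically in the reflectionless case.
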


\begin{proof}
	By assumption, $\lambda_0 = \zeta_0^2$ is a double zero of $\alpha(\lambda)$ extended to $\mathbb{C}^+$ by Lemma \ref{lemma-2}. Since it folows from \eqref{SMmatrix-1} that
	$$
	\alpha(\lambda) = a(\zeta) = \det(\psi_1^{(+)}(\zeta),\psi_2^{(-)}(\zeta)),
	$$
	we conclude that there exists a constant $e_0$ such that
\begin{align}\label{double-psi-2-psi-1}
  \psi^{(-)}_{2}(\zeta_0)=e_0\psi^{(+)}_{1}(\zeta_0).
\end{align}
Furthermore, since $\zeta_0$ is the double zero of $a(\zeta)$, we have
\begin{align*}
\begin{split}
  0=a'(\zeta_0)&=\det\left((\psi^{(+)}_{1})',\psi^{(-)}_{2}\right) +\det\left(\psi^{(+)}_{1},(\psi^{(-)}_{2})'\right)\big|_{\zeta=\zeta_0}\\
  &=\det\left(\psi^{(+)}_{1},-e_0(\psi^{(+)}_{1})'+(\psi^{(-)}_{2})'\right) \big|_{\zeta=\zeta_0},
\end{split}
\end{align*}
so that there exists another constant $h_0$ such that
\begin{align}\label{psi-2-psi+1+psi+1}
(\psi^{(-)}_{2})'(\zeta_0) = e_0 (\psi^{(+)}_{1})'(\zeta_0) + h_0 \psi^{(+)}_{1}(\zeta_0).
\end{align}

By using transformation \eqref{Transformation-1}, we rewrite
\eqref{double-psi-2-psi-1} as
\begin{align}\label{double-n2-n1}
  n^{(-)}_{2}(\lambda_0) = e_0 \zeta_0^{-1} n^{(+)}_{1}(\lambda_0)e^{2i\theta(\lambda_0)},
\end{align}
where $\theta(\lambda)$ is given by (\ref{theta}). This expression agrees with (\ref{n12-S-n12-relation}) for $\alpha(\lambda_0) = 0$. By using transformation \eqref{Transformation-1} again and the product rule, we derive
\begin{align*}
(n_1^{(+)})'(\lambda_0) &= (2\zeta_0)^{-1} T(v,\zeta_0) [(\psi_1^{(+)})'(\zeta_0) - 2 i \zeta_0 \theta'(\lambda) \psi_1^{(+)}(\zeta_0)] e^{-i \theta(\lambda_0)} \\
& \qquad + (2\zeta_0)^{-1} [\partial_{\zeta} T(v,\zeta_0)] \psi_1^{(+)}(\zeta_0) e^{-i \theta(\lambda_0)}, \\
(n_2^{(-)})'(\lambda_0) &= (2\zeta_0^2)^{-1} T(v,\zeta_0) [ (\psi_2^{(-)})'(\zeta) + 2 i \zeta_0 \theta'(\lambda_0) ] e^{i \theta(\lambda_0)} - (2 \zeta_0^2)^{-1} n_2^{(-)}(\lambda_0) \\
& \qquad + (2\zeta_0^2)^{-1} [\partial_{\zeta} T(v,\zeta_0)] \psi_2^{(-)}(\zeta_0) e^{i \theta(\lambda_0)},
\end{align*}
which imply due to (\ref{psi-2-psi+1+psi+1}) and (\ref{double-n2-n1}) that
\begin{align}\label{n-2-derivetive}
(n^{(-)}_{2})'(\lambda_0) = e^{2i\theta(\lambda_0)} \left[
e_0 \zeta_0^{-1} (n^{(+)}_{1})'(\lambda_0) + (2 \zeta_0^2)^{-1}
(h_0 + 4 i e_0 \zeta_0  \theta'(\lambda_0) -  e_0 \zeta_0^{-1})  n^{(+)}_{1}(\lambda_0) \right],
\end{align}
in agreement with the derivative of (\ref{n12-S-n12-relation}) at $\lambda = \lambda_0$.

We use the chain rule
\begin{align*}
\alpha'(\lambda) &= (2 \zeta)^{-1} a'(\zeta), \\
\alpha''(\lambda) &= (2 \zeta)^{-2} [a''(\zeta) - \zeta^{-1} a'(\zeta)], \\
\alpha'''(\lambda) &= (2 \zeta)^{-3} [a'''(\zeta) - 3 \zeta^{-1} a''(\zeta)
+ 3 \zeta^{-2} a'(\zeta)].
\end{align*}
By using \eqref{double-n2-n1} and \eqref{n-2-derivetive}, we compute
from the expressions in Lemma \ref{residue-condition-theorem} that
\begin{align}\label{P-2-lambda-0}
\mathrm{P}^{-2}_{\lambda=\lambda_{0}} \left[\frac{ n^{(-)}_{2}(\lambda)}{ \alpha (\lambda)}\right] = \frac{8 \zeta_0^2 n^{(-)}_{2}(\lambda_0)}{a''(\zeta_0)}
=\frac{8e_0\zeta_0}{a''(\zeta_0)}\ n^{(+)}_{1}(\lambda_0)e^{2i\theta(\lambda_0)}
\end{align}
and
\begin{align}\label{Res-lambda-0}
\begin{split}
\mathrm{Res}_{\lambda=\lambda_{0}}&\left[\frac{ n^{(-)}_{2}(\lambda)}{ \alpha (\lambda)}\right] =
\frac{8 \zeta_0^2 (n^{(-)}_{2})'(\lambda_0)}{a''(\zeta_0)} -
\frac{4 \zeta_0  n^{(-)}_{2}(\lambda_0) \left[ a'''(\zeta_0) - 3 \zeta_0^{-1} a''(\zeta_0) \right]}{3 [a''(\zeta_0)]^{2}}, \\
&= \frac{8 e_0 \zeta_0}{a''(\zeta_0)} e^{2i\theta(\lambda_0)}
\left[ (n^{(+)}_{1})'(\lambda_0) + n^{(+)}_{1}(\lambda_0)
\left( 2 i \theta'(\lambda_0) + \frac{h_0}{2 e_0 \zeta_0} -\frac{a'''(\zeta_0)}{6 \zeta_0 a''(\zeta_0)} \right) \right].
\end{split}
\end{align}
Let \begin{align}\label{A-0-B-0-definition}
A_0=\frac{8e_0\zeta_0}{a''(\zeta_0)}, \qquad B_0 = \frac{h_0}{2 e_0 \zeta_0} -\frac{a'''(\zeta_0)}{6 \zeta_0 a''(\zeta_0)},
\end{align}
then \eqref{P-2-lambda-0} and \eqref{Res-lambda-0} are transformed into \eqref{P-2-lambda-0-1} and \eqref{Res-lambda-0-1}.

By using the symmetry condition (\ref{symmetry-psi}), we have
\begin{align*}
\psi_1^{(\pm)}(\zeta)= \left(
\begin{array}{cc}
0 & 1 \\
-1 & 0 \\
\end{array}
\right) \overline{\psi}_2^{(\pm)}(\zeta),\quad
\psi_2^{(\pm)}(\zeta)= \left(
\begin{array}{cc}
0 & -1 \\
1 & 0 \\
\end{array}
\right) \overline{\psi}_1^{(\pm)}(\zeta),
\end{align*}
from which we obtain with the help of (\ref{double-psi-2-psi-1}) and (\ref{psi-2-psi+1+psi+1}) that
\begin{align*}
  \psi^{(-)}_{1}(\bar\zeta_0) = -\bar{e}_0 \psi^{(+)}_{2}(\bar\zeta_0)
\end{align*}
and
\begin{align*}
(\psi^{(-)}_{1})'(\bar\zeta_0) = -\bar{e}_0 (\psi^{(+)}_{2})'(\bar\zeta_0) -\bar{h}_0 \psi^{(+)}_{2}(\bar\zeta_0).
\end{align*}
Furthermore, by using the transformation \eqref{Transformation-1} and its derivative, similarly to (\ref{double-n2-n1}) and (\ref{n-2-derivetive}),
we obtain
\begin{align*}
n^{(-)}_{1}(\bar\lambda_0) &=  -\bar{e}_0 \bar{\zeta}_0 n^{(+)}_{2}(\bar\lambda_0)e^{-2i\theta(\bar\lambda_0)}
\end{align*}
and
\begin{align*}
(n^{(-)}_{1})'(\bar\lambda_0) &= - e^{-2i\theta(\bar\lambda_0)} \left[ \bar{e}_0 \bar{\zeta}_0 (n^{(+)}_{2})'(\bar\lambda_0)
+ \frac{1}{2} (\bar{h}_0 - 4 i \bar{e}_0 \bar{\zeta}_0 \theta'(\bar{\lambda}_0) + \bar{e}_0 \bar{\zeta}_0^{-1}) n^{(+)}_{2}(\bar\lambda_0)  \right].
\end{align*}
By using these expressions we compute from Lemma \ref{residue-condition-theorem}, similarly to (\ref{P-2-lambda-0}) and (\ref{Res-lambda-0}), that
\begin{align}
\label{P-2-barlambda-0}
\mathrm{P}^{-2}_{\lambda=\bar \lambda_{0}} \left[\frac{ n^{(-)}_{1}(\lambda)}{ \bar \alpha (\bar \lambda)}\right]
= - \frac{8 \bar e_0\bar \zeta^3_0}{\bar a''(\zeta_0)}\ n^{(+)}_{2}(\bar \lambda_0) e^{-2i\theta(\bar \lambda_0)}
\end{align}
and
\begin{align}\label{Res-barlambda-0}
\begin{split}
&\mathrm{Res}_{\lambda=\bar\lambda_{0}}\left[\frac{ n^{(-)}_{1}(\lambda)}{ \bar\alpha (\bar\lambda)}\right] \\
&=
- \frac{8 \bar e_0 \bar \zeta_0^3}{\bar a''(\zeta_0)} e^{-2i\theta(\bar \lambda_0)}
\left[ (n^{(+)}_{2})'(\bar \lambda_0) +  n^{(+)}_{2}(\lambda_0)
\left(-2 i \theta'(\bar \lambda_0) + \frac{\bar h_0}{2 \bar e_0 \bar{\zeta}_0} + \frac{1}{\bar \zeta_0^2} - \frac{\bar a'''(\zeta_0)}{6 \bar{\zeta}_0 \bar a''(\zeta_0)}\right)\right].
\end{split}
\end{align}
Using the same notations \eqref{A-0-B-0-definition} for $A_0$ and $B_0$, we transform \eqref{P-2-barlambda-0} and \eqref{Res-barlambda-0} into \eqref{P-2-bar-lambda-0-1} and \eqref{Res-bar-lambda-0-1}.
\end{proof}

\subsection{Computation of solutions of the linear algebraic system}

Using the first column of $M(\lambda)$ in (\ref{RH-solution-double}) for $\lambda \in \mathbb{C}_+$, we obtain from \eqref{P-2-bar-lambda-0-1}
and \eqref{Res-bar-lambda-0-1} that
\begin{align}
\label{double-M-1-lambda0}
\begin{split}
n^{(+)}_{1}(\lambda) &= n_1^{+\infty} e_1
- \frac{\bar{A}_0 \bar{\lambda}_0}{(\lambda-\bar{\lambda}_0)^2} \left[ 1 + (\lambda - \bar{\lambda}_0) (-2i \theta'(\bar \lambda_0) + \bar{B}_0 + \bar{\lambda}_0^{-1}) \right] n^{(+)}_{2}(\bar{\lambda}_0) e^{-2i \theta(\bar \lambda_0)} \\
& \qquad \qquad - \frac{\bar{A}_0 \bar{\lambda}_0}{\lambda - \bar{\lambda}_0} (n^{(+)}_2)'(\bar \lambda_0) e^{-2 i \theta(\bar \lambda_0)}.
\end{split}
\end{align}
Using the second column of $M(\lambda)$ in (\ref{RH-solution-double}) for $\lambda \in \mathbb{C}_-$, we obtain from   \eqref{P-2-lambda-0-1} and \eqref{Res-lambda-0-1} that
\begin{align}\label{double-M-2-barlambda0}
\begin{split}
n^{(+)}_{2}(\lambda) &= n_2^{+\infty} e_2
  + \frac{A_0}{(\lambda-\lambda_0)^2} \left[ 1 + (\lambda - \lambda_0) (2i \theta'(\lambda_0) + B_0) \right] n^{(+)}_{1}(\lambda_0) e^{2i \theta(\lambda_0)} \\
  & \qquad \qquad  + \frac{A_0}{\lambda - \lambda_0} (n^{(+)}_1)'(\lambda_0) e^{2 i \theta(\lambda_0)}.
  \end{split}
\end{align}
We can close the algebraic system by evaluating \eqref{double-M-1-lambda0} at $\lambda = \lambda_0$ and \eqref{double-M-2-barlambda0} at $\lambda = \bar{\lambda}_0$:
\begin{align}
\label{double-M-closed-1}
\begin{split}
n^{(+)}_{1}(\lambda_0) &= n_1^{+\infty} e_1
- \bar{C}_0 \bar{\lambda}_0 \left[ 1 + (\lambda_0 - \bar{\lambda}_0) (-2i \theta'(\bar \lambda_0) + \bar{B}_0 + \bar{\lambda}_0^{-1}) \right] n^{(+)}_{2}(\bar{\lambda}_0) e^{-2i \theta(\bar \lambda_0)} \\
& \qquad \qquad - \bar{C}_0 \bar{\lambda}_0 (\lambda_0 - \bar{\lambda}_0) (n^{(+)}_2)'(\bar \lambda_0) e^{-2 i \theta(\bar \lambda_0)}, \\
n^{(+)}_{2}(\bar\lambda_0) &= n_2^{+\infty} e_2
+ C_0 \left[ 1 - (\lambda_0 - \bar{\lambda}_0) (2i \theta'(\lambda_0) + B_0) \right] n^{(+)}_{1}(\lambda_0) e^{2i \theta(\lambda_0)} \\
& \qquad \qquad  - C_0 (\lambda_0 - \bar{\lambda}_0) (n^{(+)}_1)'(\lambda_0) e^{2 i \theta(\lambda_0)},
  \end{split}
\end{align}
as well as their derivatives at $\lambda = \lambda_0$ and $\lambda = \bar{\lambda}_0$ respectively:
\begin{align}\label{double-M-closed-2}
\begin{split}
(n^{(+)}_{1})'(\lambda_0) &=  \bar{C}_0 \bar{\lambda}_0 \left[ 2(\lambda_0 - \bar{\lambda}_0)^{-1} -2i \theta'(\bar \lambda_0) + \bar{B}_0 + \bar{\lambda}_0^{-1} \right] n^{(+)}_{2}(\bar{\lambda}_0) e^{-2i \theta(\bar \lambda_0)} \\
& \qquad \qquad + \bar{C}_0 \bar{\lambda}_0  (n^{(+)}_2)'(\bar \lambda_0) e^{-2 i \theta(\bar \lambda_0)}, \\
(n^{(+)}_{2})'(\bar{\lambda}_0) &= C_0 \left[ 2(\lambda_0 - \bar{\lambda}_0)^{-1} - 2i \theta'(\lambda_0) - B_0 \right] n^{(+)}_{1}(\lambda_0) e^{2i \theta(\lambda_0)} \\
& \qquad \qquad  - C_0 (n^{(+)}_1)'(\lambda_0) e^{2 i \theta(\lambda_0)},
  \end{split}
\end{align}
where
\begin{align*}
C_0 := \frac{A_0}{(\lambda_0-\bar{\lambda}_0)^2}.
\end{align*}

The following proposition solves the linear system (\ref{double-M-closed-1}) and (\ref{double-M-closed-2}) and derive the explicit representation for
the exponential double-soliton solution of the MTM system (\ref{MTM}) by using the recovery formulas (\ref{recover-potential}).

\begin{prop}\label{u-v-parameter-expression}
The potentials $u(x,t)$ and $v(x,t)$ in (\ref{recover-potential}) are expressed from solutions of the RH problem with a double pole by
	\begin{align}
	\label{cramer-u-v}
	u = \frac{\overline{N}_u}{\overline{D(M)}},\quad
	v = \frac{N_v}{D(M)},
	\end{align}
	where
	\begin{align}
	\label{tech-N-u}
	\begin{split}
	N_u &=- \lambda_0^{-1} A_0 e^{2i\theta(\lambda_0)} (2i \theta'(\lambda_0) + B_0- \lambda_0^{-1} ) - \lambda_0^{-1} A_0 |C_0|^2 \bar{\lambda}_0e^{4i \theta(\lambda_0)-2i \theta(\bar \lambda_0)} \\
	& \times [ 4 (\lambda_0 - \bar{\lambda}_0)^{-1} +(-2i \theta'(\bar{\lambda}_0) + \bar{B}_0 + \bar{\lambda}_0^{-1}) \bar{\lambda}_0 \lambda_0^{-1} -  3\lambda_0^{-1} ],
	\end{split}
	\end{align}
\begin{align}
	\begin{split}	
	N_v &= \bar A_0  e^{-2i \theta(\bar \lambda_0)} \left(-2i \theta'(\bar{\lambda}_0) + \bar{B}_0\right) +\bar A_0|C_0|^2 \bar{\lambda}_0e^{2i \theta(\lambda_0)-4i \theta(\bar \lambda_0)} \\
	& \times [ -4 (\lambda_0 - \bar{\lambda}_0)^{-1}
	+ (2i \theta'(\lambda_0) + B_0) \lambda_0 \bar{\lambda}_0^{-1} -  3 \bar{\lambda}_0^{-1} ].
		\end{split}
		\label{tech-N-v}
	\end{align}
	and
	\begin{align}
	\label{tech-D-M}
	\begin{split}
D(M) &= 1+|C_0|^4 \bar{\lambda}^2_0 e^{4i \theta(\lambda_0)-4i \theta(\bar \lambda_0)}\\
& \quad +|C_0|^2 \bar{\lambda}_0 e^{2i \theta(\lambda_0)-2i \theta(\bar \lambda_0)}
\left[ 6+2(\lambda_0 - \bar{\lambda}_0)(-2i \theta'(\bar{\lambda}_0) + \bar{B}_0 + \bar{\lambda}_0^{-1}-2i \theta'(\lambda_0) - B_0)\right.\\
&\qquad\qquad\left.-(\lambda_0 - \bar{\lambda}_0)^2(-2i \theta'(\bar{\lambda}_0) + \bar{B}_0 + \bar{\lambda}_0^{-1})(2i \theta'(\lambda_0) + B_0)\right]
	\end{split}
	\end{align}
\end{prop}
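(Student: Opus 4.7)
The plan is to combine the residue representation \eqref{RH-solution-double} with the recovery formulas \eqref{recover-potential}: first solve the closed linear system \eqref{double-M-closed-1}--\eqref{double-M-closed-2} for $n^{(+)}_{1}(\lambda_0)$, $n^{(+)}_{2}(\bar\lambda_0)$ and their $\lambda$-derivatives, then evaluate $M(\lambda)$ at $\lambda = 0$. The key structural remark is that all coupling coefficients in \eqref{double-M-closed-1}--\eqref{double-M-closed-2} are scalars, while the inhomogeneities are $n_1^{+\infty} e_1$ and $n_2^{+\infty} e_2$. Consequently the vector system decouples componentwise: the first components of the four unknowns form a closed $4\times 4$ subsystem sourced only by $n_1^{+\infty}$, and the second components form an analogous subsystem sourced only by $n_2^{+\infty}$. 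Since $u = \overline{M_{12}(0)}$ depends through \eqref{P-matrix} only on the first components of $n^{(+)}_{1}(\lambda_0)$ and $(n^{(+)}_{1})'(\lambda_0)$, while $v = M_{21}(0)$ depends only on the corresponding second components of $n^{(+)}_{2}(\bar\lambda_0)$ and $(n^{(+)}_{2})'(\bar\lambda_0)$, the two subsystems can be analyzed separately.

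In each subsystem I would use the two equations in \eqref{double-M-closed-2} to express the ``derivative'' unknowns linearly in terms of the ``value'' unknowns; eliminating them from \eqref{double-M-closed-1} collapses the problem to a $2\times 2$ linear system of the shape $(\mathbb{I} - K) x = y$, where $K$ is the product of the two residue operators and scales as $|C_0|^2 \bar\lambda_0 e^{2i\theta(\lambda_0) - 2i\theta(\bar\lambda_0)}$ with correction terms from $2i\theta'$, $B_0$, $\bar B_0$, and $\bar\lambda_0^{-1}$. Cramer's rule then produces a single common denominator $\det(\mathbb{I} - K)$, which after simplification is precisely $D(M)$ in \eqref{tech-D-M}; in particular the $|C_0|^4$ term emerges as $\det K$.

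Having solved the linear system, I would compute $M_{12}(0)$ and $M_{21}(0)$ directly from \eqref{RH-solution-double}. Using the explicit residues \eqref{P-2-lambda-0-1}--\eqref{Res-bar-lambda-0-1}, the $(1,2)$ and $(2,1)$ entries of $M$ only receive contributions from the pole in the opposite half-plane, so after combining simple-pole and double-pole terms the formulas collapse to
\begin{align*}
M_{12}(0) &= -(n_1^{+\infty})^{-1} A_0 \lambda_0^{-1} e^{2i\theta(\lambda_0)} \left[ [(n^{(+)}_{1})'(\lambda_0)]_1 + [n^{(+)}_{1}(\lambda_0)]_1 \bigl( 2i\theta'(\lambda_0) + B_0 - \lambda_0^{-1} \bigr) \right], \\
M_{21}(0) &= (n_2^{+\infty})^{-1} \bar A_0 e^{-2i\theta(\bar\lambda_0)} \left[ [(n^{(+)}_{2})'(\bar\lambda_0)]_2 + [n^{(+)}_{2}(\bar\lambda_0)]_2 \bigl( -2i\theta'(\bar\lambda_0) + \bar B_0 \bigr) \right],
\end{align*}
where the $-\lambda_0^{-1}$ shift in the first line reflects cancellation between the $1/\lambda_0$ and $1/\lambda_0^2$ pole contributions. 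Substituting the Cramer output for the value and derivative components, and using $|n_{1,2}^{+\infty}| = 1$ so that conjugation in $u = \overline{M_{12}(0)}$ is handled cleanly and the factors $n_{1,2}^{+\infty}$ cancel in the resulting ratios, yields the claimed representation \eqref{cramer-u-v} with $N_u$, $N_v$, and $D(M)$ as in \eqref{tech-N-u}--\eqref{tech-D-M}.

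The main obstacle is bookkeeping rather than conceptual: matching the stated closed forms requires careful tracking of signs and of the many $\theta'$, $B_0$, $\bar B_0$, $\lambda_0^{\pm 1}$ terms, together with the cross cancellations between numerator and denominator produced by Cramer's rule, and especially the separation of the $|C_0|^4$ contribution to $D(M)$ from the $|C_0|^2$ remainder carrying the $\theta'$ and $B_0$ corrections.
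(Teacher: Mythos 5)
Your proposal is correct in substance and follows essentially the same route as the paper: reduce the closed algebraic system \eqref{double-M-closed-1}--\eqref{double-M-closed-2} by elimination, apply Cramer's rule, and substitute into the recovery formulas. Your expressions for $M_{12}(0)$ and $M_{21}(0)$ --- including the $-\lambda_0^{-1}$ shift from combining the simple- and double-pole contributions in the $u$-formula and the cancellation of the $\bar\lambda_0^{-1}$ term in the $v$-formula --- match \eqref{recover-potential-u-double} and \eqref{recover-potential-v-double} exactly, since $(n_1^{+\infty})^{-1} = n_2^{+\infty}$. The one point where your description diverges from the paper is the choice of retained unknowns in the reduced $2\times 2$ system, and it affects the accuracy of one intermediate claim. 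The paper keeps the pair $n_1^{(+)}(\lambda_0)$, $(n_1^{(+)})'(\lambda_0)$ (and, for $v$, the corresponding pair at $\bar\lambda_0$) and eliminates the two unknowns at the opposite pole; because those enter their own equations with unit coefficient, the reduced matrix is literally $\mathbb{I}-K$ with $K=\mathcal{O}(|C_0|^2)$, its determinant is $D(M)$ on the nose, and both quantities needed in the recovery formula drop out of a single application of Cramer's rule over $D(M)$. Your stated elimination --- solving \eqref{double-M-closed-2} for the two derivative unknowns and keeping the two values --- requires inverting the derivative block, whose determinant is $1+|C_0|^2\bar\lambda_0 e^{2i\theta(\lambda_0)-2i\theta(\bar\lambda_0)}\neq 1$; consequently the determinant of your reduced value-only system is $D(M)$ divided by this factor, not $D(M)$ itself, and the missing factor only reappears when you back-substitute for $(n_1^{(+)})'(\lambda_0)$ in the bracket $\bigl[(n^{(+)}_{11})'(\lambda_0)+n^{(+)}_{11}(\lambda_0)(2i\theta'(\lambda_0)+B_0-\lambda_0^{-1})\bigr]$. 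The final formulas \eqref{tech-N-u}--\eqref{tech-D-M} are unaffected, but the assertion that your $\det(\mathbb{I}-K)$ ``is precisely $D(M)$'' holds for the paper's pairing of unknowns, not for yours; adopting the paper's pairing removes this wrinkle and the extra back-substitution step.
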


\begin{proof}	
By substituting \eqref{RH-solution-double}, \eqref{P-2-lambda-0-1}, and \eqref{Res-lambda-0-1} into
\eqref{recover-potential}, we obtain
\begin{align}
\label{recover-potential-u-double}
\begin{split}
\overline{u } &= \lim_{\lambda \to 0} M_{12}(\lambda) \\
&=-\frac{1}{\lambda_0} n_2^{+\infty} \mathrm{Res}_{\lambda=\lambda_0} \left[ \frac{ n^{(-)}_{21}(\lambda)}{ \alpha(\lambda)} \right] + \frac{1}{\lambda_0^2} n_2^{+\infty} \mathrm{P}^{-2}_{\lambda=\lambda_0} \left[ \frac{ n^{(-)}_{21}(\lambda)}{  \alpha(\lambda)} \right],\\
&=-\frac{1}{\lambda_0} n_2^{+\infty} A_0 e^{2i\theta(\lambda_0)}
\left[ (n^{(+)}_{11})'(\lambda_0) + n^{(+)}_{11}(\lambda_0) (2 i \theta'(\lambda_0) + B_0 - \lambda_0^{-1} ) \right],
\end{split}
\end{align}
where the second index for vectors $n_1^{(+)}$ and $n_2^{(-)}$ denotes the corresponding components of $2$-vectors.
Similarly, by substituting \eqref{RH-solution-double}, \eqref{P-2-bar-lambda-0-1}, and \eqref{Res-bar-lambda-0-1} into  \eqref{recover-potential}, we obtain
\begin{align}
\label{recover-potential-v-double}
\begin{split}
v &= \lim_{\lambda \to 0} M_{21}(\lambda)\\
&=-\frac{1}{\bar \lambda_0} n_1^{+\infty} \mathrm{Res}_{\lambda= \bar \lambda_0} \left[ \frac{ n^{(-)}_{12}(\lambda)}{\bar \alpha(\lambda)} \right] + \frac{1}{\bar \lambda_0^2} n_1^{+\infty} \mathrm{P}^{-2}_{\lambda= \bar \lambda_0} \left[ \frac{ n^{(-)}_{12}(\lambda)}{ \bar \alpha(\lambda)} \right],\\
&= n_1^{+\infty} \bar{A}_0 e^{-2i\theta(\bar \lambda_0)}
\left[ (n^{(+)}_{22})'(\bar \lambda_0) + n^{(+)}_{22}(\bar \lambda_0) (-2 i \theta'(\bar \lambda_0) + \bar{B}_0  ) \right].
\end{split}
\end{align}
The linear system (\ref{double-M-closed-1}) and (\ref{double-M-closed-2}) can be rewritten for the vectors $n_1^{(+)}(\lambda_0)$ and $(n_1^{(+)})'(\lambda_0)$:
\begin{align*}
M \left( \begin{matrix} n_1^{(+)}(\lambda_0) \\(n_1^{(+)})'(\lambda_0) \end{matrix} \right) = \left( \begin{matrix} n_1^{+\infty} e_1
- \bar{C}_0 \bar{\lambda}_0 \left[ 1 + (\lambda_0 - \bar{\lambda}_0) (-2i \theta'(\bar \lambda_0) + \bar{B}_0 + \bar{\lambda}_0^{-1}) \right] n_{2}^{+\infty}  e^{-2i \theta(\bar \lambda_0)} e_2  \\
\bar{C}_0 \bar{\lambda}_0 \left[ 2(\lambda_0 - \bar{\lambda}_0)^{-1} -2i \theta'(\bar \lambda_0) + \bar{B}_0 + \bar{\lambda}_0^{-1} \right] n_2^{+\infty} e^{-2i \theta(\bar \lambda_0)} e_2 \end{matrix} \right),
\end{align*}
where
\begin{align*}
M = \left( \begin{matrix} M_{11} I & M_{12} I \\ M_{21} I & M_{22} I \end{matrix} \right)
\end{align*}
with $I$ being a $2$-by-$2$ identity matrix and $M_{ij}$ being scalar entries given by
\begin{align*}
  M_{11} &=
1 + |C_0|^2 \bar{\lambda}_0 e^{2i \theta(\lambda_0)-2i \theta(\bar \lambda_0)}
[ 3 + (\lambda_0 - \bar{\lambda}_0) (-2i \theta'(\bar{\lambda}_0) + \bar{B}_0 + \bar{\lambda}_0^{-1}) - 2 (\lambda_0 - \bar{\lambda}_0) (2i \theta'(\lambda_0) + B_0) \\
& \qquad - (\lambda_0 - \bar{\lambda}_0)^2 (-2i \theta'(\bar{\lambda}_0) + \bar{B}_0 + \bar{\lambda}_0^{-1}) (2i \theta'(\lambda_0) + B_0) ], \\
M_{12}  & = -|C_0|^2 \bar{\lambda}_0e^{2i \theta(\lambda_0)-2i \theta(\bar \lambda_0)}(\lambda_0 - \bar{\lambda}_0)\left[ 2+(\lambda_0 - \bar{\lambda}_0)(-2i \theta'(\bar \lambda_0) + \bar{B}_0 + \bar{\lambda}_0^{-1})\right], \\
M_{21} &= -|C_0|^2 \bar{\lambda}_0 (\lambda_0 - \bar{\lambda}_0)^{-1} e^{2i \theta(\lambda_0)-2i \theta(\bar \lambda_0)}
[ 4 + (\lambda_0 - \bar{\lambda}_0) (-2i \theta'(\bar{\lambda}_0) + \bar{B}_0 + \bar{\lambda}_0^{-1}) \\
& \qquad - 3 (\lambda_0 - \bar{\lambda}_0)  (2i \theta'(\lambda_0) + B_0) - (\lambda_0 - \bar{\lambda}_0)^2 (2i \theta'(\lambda_0) + B_0) (-2i \theta'(\bar{\lambda}_0) + \bar{B}_0 + \bar{\lambda}_0^{-1}) ], \\
M_{22} &= 1 + |C_0|^2 \bar{\lambda}_0 e^{2i \theta(\lambda_0)-2i \theta(\bar \lambda_0)}
[ 3 + (\lambda_0 - \bar{\lambda}_0) (-2i \theta'(\bar{\lambda}_0) + \bar{B}_0 + \bar{\lambda}_0^{-1}) ].
\end{align*}
By Cramer's rule, we obtain the first components of vectors
$n^{(+)}_{1}(\lambda_0)$ and $(n^{(+)}_{1})'(\lambda_0)$:
\begin{align}
\label{Cramer-rule}
n^{(+)}_{11}(\lambda_0) = \frac{n_1^{\infty} M_{22} }{D(M)}, \qquad
(n^{(+)}_{11})'(\lambda_0) = \frac{-n_1^{\infty} M_{21}}{D(M)},
\end{align}
where $D(M) = M_{11}M_{22}-M_{12}M_{21}$ recovers (\ref{tech-D-M}) after cancelation of several terms at $|C_0|^4$.
Substituting (\ref{Cramer-rule}) into (\ref{recover-potential-u-double}), we get
$u$ in the form (\ref{cramer-u-v}) with
\begin{align*}
N_u &= -\lambda_0^{-1} A_0 e^{2i\theta(\lambda_0)}
	\left[ -M_{21} +  M_{22}  (2i \theta'(\lambda_0) + B_0 - \lambda_0^{-1} ) \right]
	\end{align*}
which yields (\ref{tech-N-u}) after cancelation of several terms at $|C_0|^2$.

For the vectors $n_2^{(+)}(\lambda_0)$ and $(n_2^{(+)})'(\lambda_0)$, the linear system (\ref{double-M-closed-1}) and (\ref{double-M-closed-2}) can be rewritten  as
\begin{align*}
\tilde M \left( \begin{matrix} n_2^{(+)}(\lambda_0) \\(n_2^{(+)})'(\lambda_0) \end{matrix} \right)
= \left( \begin{matrix} n_2^{+\infty} e_2
+ C_0 \left[ 1 - (\lambda_0 - \bar{\lambda}_0) (2i \theta'(\lambda_0) + B_0) \right] n_{1}^{+\infty}  e^{2i \theta(\lambda_0)} e_1  \\
C_0 \left[ 2(\lambda_0 - \bar{\lambda}_0)^{-1} - 2i \theta'(\lambda_0) - B_0 \right] n_1^{+\infty} e^{2i \theta(\lambda_0)} e_1 \end{matrix} \right),
\end{align*}
where
\begin{align*}
\tilde M = \left( \begin{matrix} \tilde M_{11} I & \tilde M_{12} I \\ \tilde M_{21} I & \tilde M_{22} I \end{matrix} \right)
\end{align*}
with
\begin{align*}
  \tilde M_{11} &=
1 + |C_0|^2 \bar{\lambda}_0 e^{2i \theta(\lambda_0)-2i \theta(\bar \lambda_0)}
[ 3 +  2(\lambda_0 - \bar{\lambda}_0) (-2i \theta'(\bar{\lambda}_0) + \bar{B}_0 + \bar{\lambda}_0^{-1}) -  (\lambda_0 - \bar{\lambda}_0) (2i \theta'(\lambda_0) + B_0) \\
& \qquad - (\lambda_0 - \bar{\lambda}_0)^2 (-2i \theta'(\bar{\lambda}_0) + \bar{B}_0 + \bar{\lambda}_0^{-1}) (2i \theta'(\lambda_0) + B_0) ], \\
\tilde M_{12} & =|C_0|^2 \bar{\lambda}_0e^{2i \theta(\lambda_0)-2i \theta(\bar \lambda_0)}(\lambda_0 - \bar{\lambda}_0)\left[ 2-(\lambda_0 - \bar{\lambda}_0)(2i \theta'(\lambda_0) + B_0)\right],\\
\tilde M_{21} &= |C_0|^2 \bar{\lambda}_0 (\lambda_0 - \bar{\lambda}_0)^{-1} e^{2i \theta(\lambda_0)-2i \theta(\bar \lambda_0)}
[ 4 + 3(\lambda_0 - \bar{\lambda}_0) (-2i \theta'(\bar{\lambda}_0) + \bar{B}_0 + \bar{\lambda}_0^{-1}) \\
& \qquad -  (\lambda_0 - \bar{\lambda}_0)  (2i \theta'(\lambda_0) + B_0) - (\lambda_0 - \bar{\lambda}_0)^2 (2i \theta'(\lambda_0) + B_0) (-2i \theta'(\bar{\lambda}_0) + \bar{B}_0 + \bar{\lambda}_0^{-1}) ], \\
\tilde M_{22} &= 1 + |C_0|^2 \bar{\lambda}_0 e^{2i \theta(\lambda_0)-2i \theta(\bar \lambda_0)}
[ 3 - (\lambda_0 - \bar{\lambda}_0) (2i \theta'(\lambda_0) + B_0) ].
\end{align*}
By Cramer's rule, we obtain the second components of vectors
$n_2^{(+)}(\lambda_0)$ and $(n_2^{(+)})'(\lambda_0)$:
\begin{align}
\label{Cramer-rule-v}
n^{(+)}_{22}(\lambda_0) =  \frac{n_2^{\infty} \tilde M_{22} }{D(\tilde M)}, \quad
(n^{(+)}_{22})'(\lambda_0) = \frac{-n_2^{\infty}\tilde  M_{21}}{D(\tilde M)},
\end{align}
where $D(\tilde M) = \tilde M_{11}\tilde M_{22}-\tilde M_{12}\tilde M_{21} = D(M)$ is given by (\ref{tech-D-M}). Substituting (\ref{Cramer-rule-v}) into (\ref{recover-potential-v-double}), we get $v$ in the form (\ref{cramer-u-v}) with
\begin{align*}
  N_v = \bar A_0  e^{-2i \theta(\bar \lambda_0)} \left[-\tilde{M}_{21} + \tilde{M}_{22} (-2 i \theta'(\bar{\lambda}_0) + \bar{B}_0) \right],
\end{align*}
which yields (\ref{tech-N-v}) after cancelation of several terms at $|C_0|^2$.
\end{proof}

\subsection{Proof of Theorem \ref{explicit-expression-NU-Muv}}

In order to rewrite the recovered potentials of Proposition \ref{u-v-parameter-expression} in the simplified form of Theorem \ref{explicit-expression-NU-Muv}, we set $\lambda_0 = e^{i \gamma} \in \mathbb{S}^1 \cap \mathbb{C}^+$ with $\gamma \in (0,\pi)$.
A more general solution is obtained with the Lorentz symmetry (\ref{MTM-Lorentz}).

Since $\lambda_0 = e^{i \gamma}$, we obtain
\begin{align*}
  2 i \theta(\lambda_0) &= -x \sin\gamma + i t\cos\gamma,\\
  4 i \theta'(\lambda_0) &= i(x+t) + i (x-t) e^{-2i\gamma},
\end{align*}
and complex conjugate for $-2i\theta(\bar{\lambda}_0)$ and $-4 i \theta'(\bar{\lambda}_0)$.
	
Let us define $A_0 = (\lambda_0 - \bar{\lambda}_0)^2 \lambda_0^{3/2} = -4 (\sin \gamma)^2 e^{\frac{3 i \gamma}{2}}$, which yields $C_0 = e^{\frac{3i\gamma}{2}}$. A more general solution with two translational parameters $x_0,t_0 \in \mathbb{R}$ can be obtained by the translational symmetry (\ref{MTM-symm}) or by including two additional parameters in $A_0 \in \mathbb{C}$. Then it follows from (\ref{tech-N-u}), (\ref{tech-N-v}), and (\ref{tech-D-M}) that
\begin{align*}
N_u &= 2i (\sin \gamma)^2 e^{-x \sin \gamma + i t  \cos \gamma + \frac{i \gamma}{2}} \left( x + t + (x-t) e^{-2i \gamma} - 2 i B_0 + 2 i e^{-i \gamma}  \right. \\
& \left. \quad
- e^{-2 x \sin \gamma - i \gamma} [ 4 (\sin \gamma)^{-1} + (x + t + (x-t) e^{2i \gamma} + 2 i \bar{B}_0 + 2 i e^{i \gamma}) e^{-2i \gamma} - 6 i e^{-i \gamma}] \right),\\
N_v &= 2i (\sin \gamma)^2 e^{-x \sin \gamma - i t  \cos \gamma - \frac{3 i \gamma}{2}} \left( x + t + (x-t) e^{2i \gamma} + 2 i \bar{B}_0  \right. \\
& \left. \quad -
e^{-2 x \sin \gamma - i \gamma} [ 4 (\sin \gamma)^{-1} + (x + t + (x-t)e^{-2i \gamma} - 2 i B_0) e^{2i \gamma} + 6 i e^{i \gamma}] \right),
\end{align*}
and
\begin{align*}
\begin{split}
D(M) &= 1 + e^{-4 x \sin \gamma - 2 i \gamma} + e^{-2 x \sin \gamma - i \gamma} \\
& \quad \times
\left[ 6 + 4 (\sin \gamma) (x + t + (x-t) \cos 2\gamma  + i(\bar{B}_0 - B_0 + e^{i \gamma}) ) \right.\\
&\qquad\qquad\left. + (\sin \gamma)^2 (x + t + (x-t) e^{2i \gamma} + 2i \bar{B}_0 + 2i e^{i \gamma})(x + t + (x-t) e^{-2i \gamma} - 2i B_0)\right].
\end{split}
\end{align*}

By using trigonometric identities, we reduce expressions for $N_u$, $N_v$ and $D(M)$ to the form:
\begin{align*}
N_u &= 4i (\sin \gamma)^2 e^{-x \sin \gamma + i t  \cos \gamma - \frac{i \gamma}{2}} \left( x \cos \gamma + i t \sin \gamma - i B_0 e^{i \gamma} + i \right. \\
& \left. \quad
- e^{-2 x \sin \gamma - i \gamma} [ 2 \cot \gamma + x \cos \gamma - i t \sin \gamma  + i \bar{B}_0 e^{-i \gamma}] \right),\\
N_v &= 4i (\sin \gamma)^2 e^{-x \sin \gamma - i t  \cos \gamma - \frac{i \gamma}{2}} \left( x \cos \gamma - i t \sin \gamma + i \bar{B}_0 e^{-i \gamma} \right. \\
& \left. \quad -
e^{-2 x \sin \gamma - i \gamma} [ 2 \cot \gamma + x \cos \gamma + i t \sin \gamma - i B_0 e^{i \gamma} + i] \right),\\
D(M) &= 1 + e^{-4 x \sin \gamma - 2 i \gamma} + 2 e^{-2 x \sin \gamma - i \gamma} \\
& \quad \times
\left[ 3 + 2 (\sin \gamma) (2 x (\cos \gamma)^2 + 2t (\sin \gamma)^2  + i(\bar{B}_0 - B_0 + e^{i \gamma}) ) \right.\\
&\qquad\qquad\left. + 2 (\sin \gamma)^2 (x \cos \gamma -i t \sin \gamma + i \bar{B}_0 e^{-i\gamma} + i)(x \cos \gamma + i t \sin \gamma - i B_0 e^{i\gamma}) \right].
\end{align*}
By selecting $B_0 = -i e^{-i \gamma} [ \tilde{x}_0 \cos \gamma + i \tilde{t}_0 \sin \gamma]$ with arbitrary parameters $\tilde{x}_0$ and $\tilde{t}_0$, we
obtain the same expressions for $N_u$ and $N_v$ as in Theorem \ref{explicit-expression-NU-Muv}. Regarding the expression for $D(M)$, we obtain
\begin{align*}
D(M) &= 1 + e^{-4 x \sin \gamma - 2 i \gamma} + 2 e^{-2 x \sin \gamma - i \gamma} \\
& \quad \times
\left[ 3 + 2 (\sin \gamma) [ 2 (x-\tilde{x}_0) (\cos \gamma)^2 + 2 (t-\tilde{t}_0) (\sin \gamma)^2  + i \cos \gamma - \sin \gamma ] \right.\\
&\qquad\qquad\left. + 2 (\sin \gamma)^2 [ (x-\tilde{x}_0) \cos \gamma -i (t-\tilde{t}_0) \sin \gamma  + i ] [(x-\tilde{x}_0) \cos \gamma + i (t-\tilde{t}_0) \sin \gamma)] \right].
\end{align*}
Expanding the bracket being $2 e^{-2x \sin \gamma - i \gamma}$ yields
\begin{align*}
& 3 - 2 (\sin \gamma)^2 + 2 i (\sin \gamma) (\cos \gamma) + 4 (\sin \gamma) (\cos \gamma)^2 (x - \tilde{x}_0) + 2 (\sin \gamma)^3 (t - \tilde{t}_0) \\
& \quad + 2 (\sin \gamma)^2 (\cos \gamma)^2 (x - \tilde{x}_0)^2 + 2i (\sin \gamma)^2 (\cos \gamma) (x - \tilde{x}_0) + 2 (\sin \gamma)^4 (t - \tilde{t}_0)^2 \\
=&  1 + 2 (\sin \gamma)^2 \left[  \cot \gamma + (x-\tilde{x}_0) \cos \gamma + \frac{i}{2} \right]^2 + 2 (\sin \gamma)^4 \left[t-\tilde{t}_0 + \frac{1}{2 \sin \gamma} \right]^2,
\end{align*}
which coincides with the expression for $D(M)$ in Theorem \ref{explicit-expression-NU-Muv}.

\subsection{Computations of eigenvectors and generalized eigenvectors}

An eigenvector of the linear system (\ref{eigenvector}) is given by
$\psi^{(+)}_{1}(\zeta_0)$ for $\zeta_0 = e^{\frac{i\gamma}{2}}$ with $\gamma \in (0,\pi)$, which decays exponentially as $|x| \to \infty$
due to \eqref{asymptotic-Lax-pair} and \eqref{double-psi-2-psi-1}.
By using the transformation \eqref{Transformation-1}, we obtain
\begin{align}\label{Eigen-explicite-expression}
\psi^{(+)}_{1}(\zeta_0)
= e^{i \theta(\lambda_0)} [T(v,\zeta_0)]^{-1} n_1^{(+)}(\lambda_0),
\end{align}
where
\begin{align*}
[T(v,\zeta_0)]^{-1} = \left(
\begin{array}{cc}
1 & 0 \\
-e^{-\frac{i\gamma}{2}} v & e^{-\frac{i\gamma}{2}}
\end{array}
\right)
\end{align*}
and $n_1^{(+)}(\lambda_0)$ with $\lambda_0 = e^{i \gamma}$ is obtained from the linear system in the proof of Proposition \ref{u-v-parameter-expression}.
By Cramer's rule, as in (\ref{Cramer-rule}), we obtain
\begin{align}\label{eigen-expre}
n^{(+)}_{1}(\lambda_0) = \frac{1}{D(M)} P^{\infty} \left( \begin{array}{c}
M_{22} \\ -\bar{C}_0 \bar{\lambda}_0
e^{-2i \theta(\bar{\lambda}_0)} (M_{22} b_1 + M_{12} b_2)
\end{array} \right),
\end{align}
where
\begin{align*}
b_1 &=  1 + (\lambda_0 - \bar{\lambda}_0) (-2i \theta'(\bar \lambda_0) + \bar{B}_0 + \bar{\lambda}_0^{-1}), \\
b_2 &=  (\lambda_0 - \bar{\lambda}_0)^{-1} (2 + (\lambda_0 - \bar{\lambda}_0) (-2i \theta'(\bar \lambda_0) + \bar{B}_0 + \bar{\lambda}_0^{-1})),
\end{align*}
and we recall that $P^{\infty} = {\rm diag}(n^{+\infty}_1,n^{+\infty}_2)$ with
\begin{align*}
n^{+\infty}_{1} =e^{\frac{i}{4}\int_{+\infty}^{x}(|u|^2+|v|^2) dy} =  \bar{n}^{+\infty}_{2}.
\end{align*}
By using the same definitions of $A_0$ and $B_0$ as in the proof of Theorem \ref{explicit-expression-NU-Muv}, we obtain
\begin{align*}
M_{12} &= -4i (\sin \gamma)^2 e^{-2x\sin\gamma} \left( \cot\gamma + (x-\tilde x_0) \cos\gamma - i (t-\tilde t_0) \sin\gamma \right), \\
M_{22} &= 1+ (\sin \gamma) e^{-2x\sin\gamma} \left( 3 \cot\gamma + 2 (x-\tilde x_0) \cos\gamma - 2 i (t-\tilde t_0) \sin\gamma - i\right)
\end{align*}
which yields the second component of the vector $n_1^{(+)}(\lambda_0)$ in the explicit form:
\begin{align*}
-\bar{C}_0 \bar{\lambda}_0
e^{-2i \theta(\bar{\lambda}_0)} (M_{22} b_1 + M_{12} b_2) &= -e^{-x\sin\gamma-it\cos\gamma -\frac{3i \gamma}{2}} \\
& \quad \times \left(  e^{i \gamma} + 2 \sin\gamma [ (x-\tilde x_0 ) \cos\gamma - i (t-\tilde t_0) \sin\gamma ] - e^{-2x \sin \gamma - 2 i \gamma}   \right).
\end{align*}

A generalized eigenvector of the linear system (\ref{eigenvector-generalized}) is given by
$(\psi^{(+)}_{1})'(\zeta_0)$ for $\zeta_0 = e^{\frac{i\gamma}{2}}$ with $\gamma \in (0,\pi)$, which decays exponentially as $|x| \to \infty$
due to \eqref{asymptotic-Lax-pair}, \eqref{double-psi-2-psi-1}, and \eqref{psi-2-psi+1+psi+1}. By differentiating the transformation \eqref{Transformation-1} in $\lambda = \zeta^2$ at $\lambda_0 = \zeta_0^2 = e^{i \gamma}$, we obtain
\begin{align}
\label{Gener-Eigen-explicite-expression}
\begin{split}
(\psi^{(+)}_{1})'(\zeta_0) &= 2 \zeta_0 e^{i \theta(\lambda_0)} [T(v,\zeta_0)]^{-1} (n_1^{(+)})'(\lambda_0) + 2i \zeta_0 \theta'(\lambda_0)
\psi^{(+)}_1(\zeta_0) \\
& \qquad - \zeta_0^{-1} \partial_{\zeta} T(v,\zeta) \psi^{(+)}_1(\zeta_0),
\end{split}
 \end{align}
where $\psi^{(+)}_1(\zeta_0)$ is given by the exponentially decaying eigenfunction (\ref{Eigen-explicite-expression}) and
\begin{equation*}
\partial_{\zeta} T(v,\zeta) =
\left( \begin{matrix} 0 & 0 \\ 0 & 1 \end{matrix} \right).
\end{equation*}
Hence, in order to obtain $(\psi^{(+)}_{1})'(\zeta_0)$, we only need to compute $(n_1^{(+)})'(\lambda_0)$ and use the transformation (\ref{Gener-Eigen-explicite-expression}).
By Cramer's rule, as in (\ref{Cramer-rule}), we obtain
\begin{align}\label{generalized-eigen-expre}
(n^{(+)}_{1})'(\lambda_0) = \frac{1}{D(M)} P^{\infty} \left( \begin{array}{c}
-M_{21} \\ \bar{C}_0 \bar{\lambda}_0
e^{-2i \theta(\bar{\lambda}_0)} (M_{11} b_2 + M_{21} b_1)
\end{array} \right).
\end{align}
Proceeding similarly, we obtain
\begin{align*}
M_{11} &= 1 + e^{-2x\sin\gamma - i\gamma} \left( 3 +
2 (\sin \gamma) e^{i \gamma} [(x-\tilde x_0) \cos\gamma - i (t-\tilde t_0) \sin\gamma + i ] \right. \\
& \quad + 4 (\sin \gamma) e^{-i \gamma} [(x-\tilde x_0) \cos\gamma + i (t-\tilde t_0) \sin\gamma] \\
& \quad \left. + 4 (\sin \gamma)^2 [(x-\tilde x_0) \cos\gamma - i (t-\tilde t_0) \sin\gamma + i ] [(x-\tilde x_0) \cos\gamma + i (t-\tilde t_0) \sin\gamma] \right), \\
M_{21} &= \frac{i}{\sin \gamma} e^{-2x\sin\gamma-i\gamma} \left( 2 + (\sin \gamma) e^{i \gamma} [(x-\tilde x_0) \cos\gamma - i (t-\tilde t_0) \sin\gamma + i ] \right. \\
& \quad + 3 (\sin \gamma) e^{-i \gamma} [(x-\tilde x_0) \cos\gamma + i (t-\tilde t_0) \sin\gamma] \\
& \quad \left. + 2 (\sin \gamma)^2 [(x-\tilde x_0) \cos\gamma - i (t-\tilde t_0) \sin\gamma + i ] [(x-\tilde x_0) \cos\gamma + i (t-\tilde t_0) \sin\gamma] \right), \\
\end{align*}
which yields the second component of the vector $(n_1^{(+)})'(\lambda_0)$ in the explicit form:
\begin{align*}
& \bar{C}_0 \bar{\lambda}_0 e^{-2i \theta(\bar{\lambda}_0)} (M_{11} b_2 + M_{21} b_1) =  -i e^{-x\sin\gamma-it\cos\gamma-\frac{3i}{2}\gamma}  \left( \cot \gamma  + (x-\tilde x_0 ) \cos\gamma - i (t-\tilde t_0) \sin\gamma   \right. \\
& \qquad \left. +  e^{-2x\sin\gamma -3i\gamma} \left[ \cot \gamma  +  (x-\tilde x_0 ) \cos\gamma + i (t-\tilde t_0) \sin\gamma + i   \right] \right).
\end{align*}

\begin{rem}
	The explicit expressions (\ref{eigen-expre}) and (\ref{generalized-eigen-expre}) confirm that both the eigenvector (\ref{Eigen-explicite-expression}) and the generalized eigenvector (\ref{Gener-Eigen-explicite-expression}) decays exponentially as $x \to \pm \infty$ since $D(M) \to 1$ as $x \to +\infty$ and $D \sim e^{-4 x \sin \gamma -2 i \gamma}$ as $x \to -\infty$.
\end{rem}

\subsection{Numerical illustration of the exponential double-solitons}

We plot the exponential double-soliton solutions of Theorem \ref{explicit-expression-NU-Muv} in Figure \ref{fig-1} for three different values of $\gamma$. The translational parameters in (\ref{double-soliton}) are set to $\tilde{x}_0 = \frac{1}{\sin\gamma}$ and $ \tilde{t}_0 = \frac{1}{2\sin\gamma}$. The solutions describe scattering of two identical solitons which slowly approach to each other, overlap, and then slowly diverge from each other.

\begin{figure}[htb!]
	\centering
	\includegraphics[width=5.5cm,height=5.3cm,angle=0]{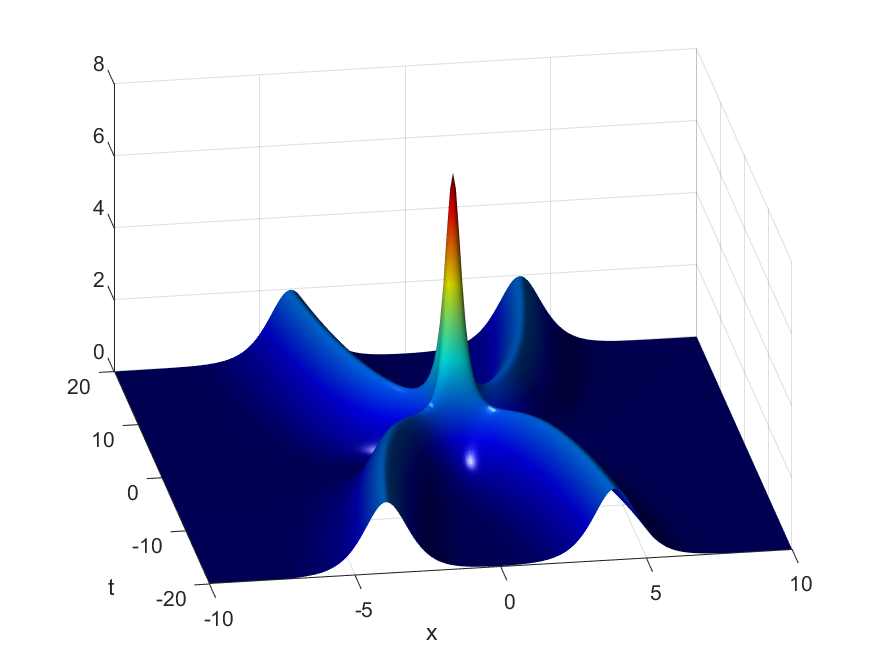}
	\includegraphics[width=5.5cm,height=5.3cm,angle=0]{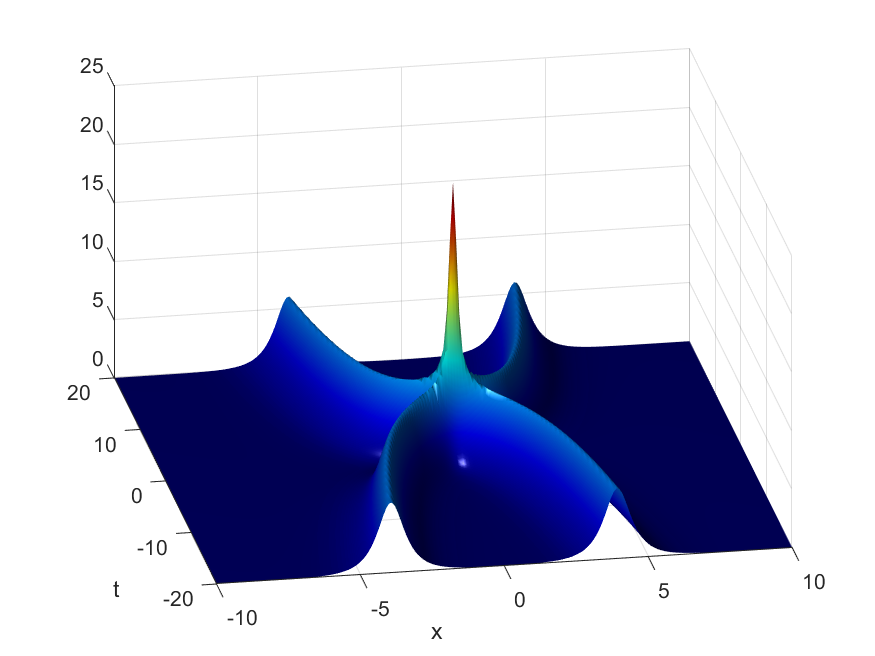}
	\includegraphics[width=5.5cm,height=5.3cm,angle=0]{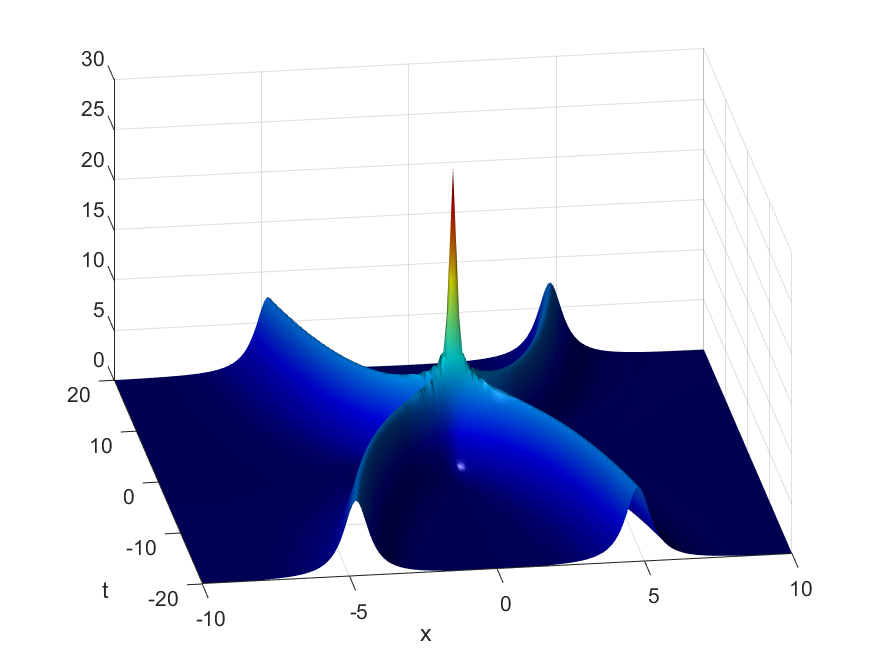}\\
	(\textbf{a})\qquad  \qquad\qquad\qquad\qquad\qquad~~~(\textbf{b})
	\qquad\qquad\qquad\qquad\qquad(\textbf{c})\\
	
	\caption{(Color online) The surface plots of $|u(x,t)|^2 + |v(x,t)|^2$ for the exponential double-soliton solutions with (a) $\gamma=\frac{\pi}{3}$, (b) $\gamma=\frac{2\pi}{3}$, and (c) $\gamma=\frac{5\pi}{6}$.}\label{fig-1}
\end{figure}

We shall find the approximate distance between the two identical solitons
for large $|x| + |t|$. It follows from the bilinear equations,
see \cite{Chen-SAPM-2023,Han-2024} and Appendix B, that
\begin{align}
  |u|^2+|v|^2=\frac{ |N_u|^2+|N_v|^2}{|D(M)|^2} = 2i \frac{\partial}{\partial x} \log \frac{\overline{D(M)}}{D(M)}.
\end{align}
Therefore, we just need to investigate the behavior of $D(M)$  for large $|x| + |t|$. The dominant terms of $D(M)$ as $|x| + |t| \to \infty$ are given by
\begin{align*}
D(M) \sim e^{-2x \sin \gamma - i \gamma} \left( e^{-2x \sin \gamma - i \gamma}
+ 4 (\sin \gamma)^4 t^2 \right),
\end{align*}
from which we obtain that
\begin{align}
\label{law}
 |x | \sim \frac{ \ln |t|}{ \sin \gamma}, \quad \mbox{\rm as} \;\; |x| + |t| \to  \infty.
\end{align}
The dependence (\ref{law}) is shown in Figure \ref{fig-2} by red line
together with the contour plots from Figure \ref{fig-1}.

\begin{figure}[H]
	\centering
	\includegraphics[width=5.3cm,height=4.8cm,angle=0]{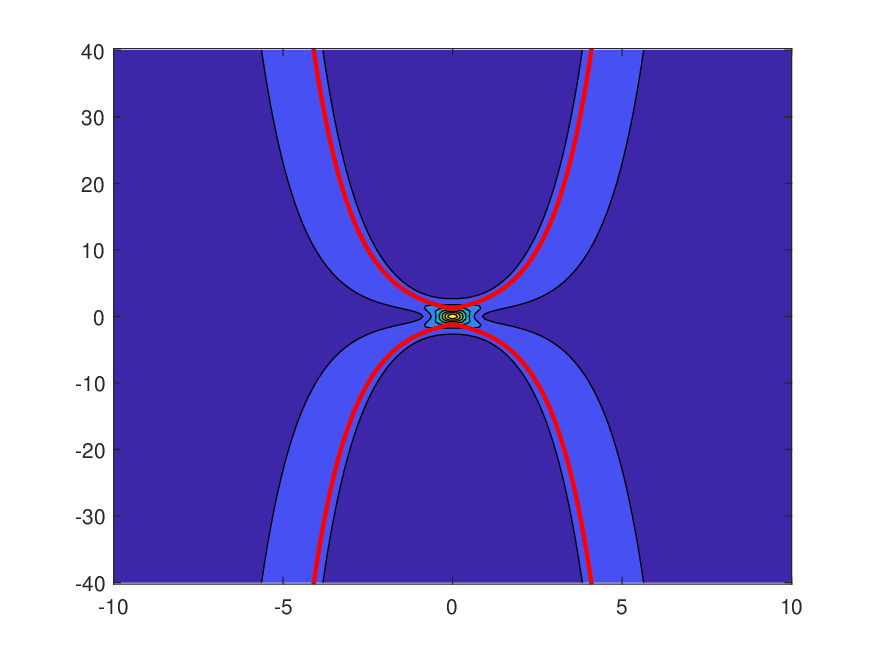}
	\includegraphics[width=5.3cm,height=4.8cm,angle=0]{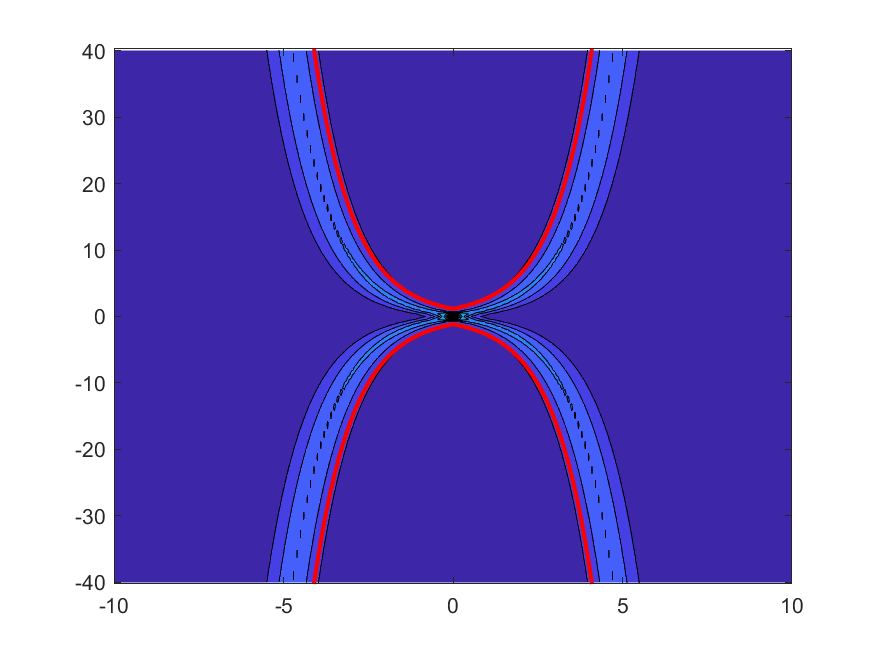}
	 \includegraphics[width=5.3cm,height=4.8cm,angle=0]{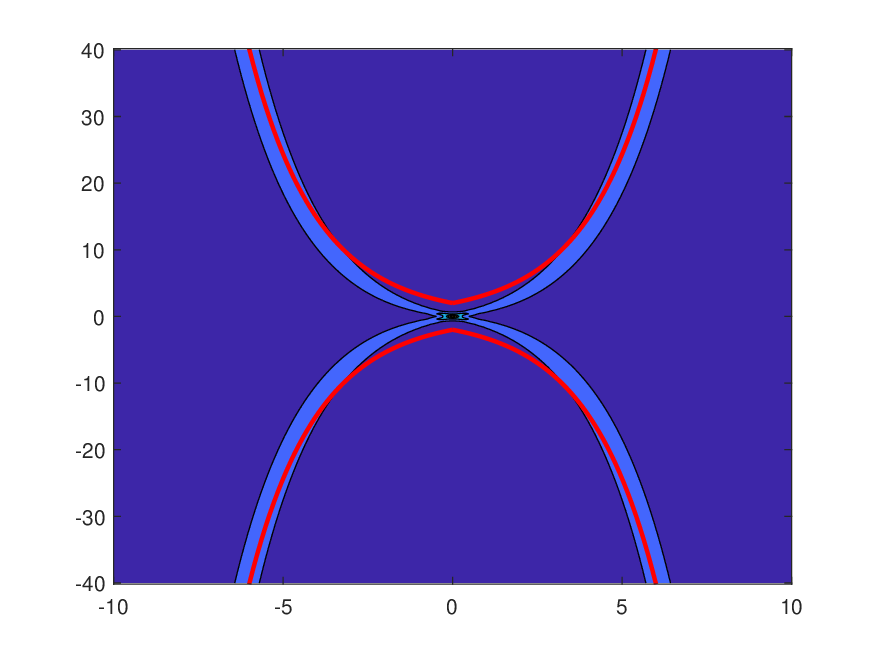}
	
	(\textbf{a})\qquad  \qquad\qquad\qquad\qquad\qquad~~~(\textbf{b})
	 \qquad\qquad\qquad\qquad\qquad(\textbf{c})\\

	\caption{\small (Color online) The contour plots of $|u(x,t)|^2+|v(x,t)|^2$ for the solutions of Figure \ref{fig-1} with (a) $\gamma=\frac{\pi}{3}$, (b) $\gamma=\frac{2\pi}{3}$, (c) $\gamma=\frac{5\pi}{6}$.}
	\label{fig-2}
\end{figure}

\section{Limit to the algebraic double-solitons}
\label{sec-4}

Here we take the limit $\gamma \to \pi$ of the exponential double-solitons in Theorem \ref{explicit-expression-NU-Muv} to derive the algebraic double-solitons. We show that the algebraic double-solitons correspond
to the double embedded eigenvalue $\zeta_0 = i$ in the linear systems (\ref{eigenvector}) and (\ref{eigenvector-generalized}). In order to obtain nontrivial limits, we change the arbitrary parameters $\tilde x_0$ and $\tilde t_0$ used in
Section \ref{sec-3} with the transformation
\begin{equation}
\label{shift-phase}
\tilde x_0 \to \tilde x_0+\frac{1}{\sin\gamma}, \qquad
\tilde t_0 \to \tilde t_0+\frac{1}{2\sin\gamma}.
\end{equation}
The two computations below give the proof of Theorem \ref{Theorem-algebra-eigen-general}.

\subsection{Computations of (\ref{u-algebric-soliton}) and (\ref{v-algebric-soliton})}

Let $\gamma:=\pi-\varepsilon$ and consider the limit $\varepsilon\to0^{+}$.  Taylor's expansions yield
\begin{align*}
\sin\gamma &= \varepsilon - \frac{1}{6} \varepsilon^3 + \mathcal{O}(\varepsilon^5),\\
\cos\gamma &= -1 + \frac{\varepsilon^2}{2} + \mathcal{O}(\varepsilon^4).
\end{align*}
To obtain (\ref{u-algebric-soliton}) and (\ref{v-algebric-soliton}), we only need to substitute (\ref{shift-phase}) into $D(M)$, $N_u$ and $N_v$ given below (\ref{double-soliton}) and collect together the coefficients of Taylor expanion at powers $\varepsilon$, $\varepsilon^2$, $\varepsilon^3$, and $\varepsilon^4$.
With the transformation (\ref{shift-phase}), we rewrite $D(M)$ as
\begin{align*}
D(M) &= 1 + e^{-4 x \sin \gamma - 2 i \gamma} + 2 e^{-2 x \sin \gamma - i \gamma} \\
& \quad \times
\left( 1 + 2 (\sin \gamma)^2 \left[  (x-\tilde{x}_0) \cos \gamma + \frac{i}{2} \right]^2 + 2 (\sin \gamma)^4 (t-\tilde{t}_0)^2 \right).
\end{align*}
Expansion as $\varepsilon \to 0$ gives nonzero terms at powers $\varepsilon^2$, $\varepsilon^3$, and $\varepsilon^4$:
\begin{align*}
\varepsilon^2:\qquad & -4 \tilde x_0 (i - 2 x + \tilde{x}_0), \\
\varepsilon^3:\qquad & -4 \tilde x_0 (i-2x) (i - 2 x + \tilde{x}_0), \\
\varepsilon^4:\qquad & \frac{1}{12} (i-2x)^4 + \frac{1}{3} (i-2x)(i-6x) - 4(t-\tilde{t}_0)^2 - 2 \tilde x_0 (i-2x)^2 (i - 2 x + \tilde{x}_0) \\
& \quad
+ \frac{2}{3} \tilde{x}_0 (8\tilde{x}_0-16x+5i).
\end{align*}

With the transformation (\ref{shift-phase}), we rewrite $N_u$ and $N_v$ as
\begin{align*}
N_u &= 4i (\sin \gamma)^2 e^{-x \sin \gamma + i t  \cos \gamma - \frac{i \gamma}{2}} \left( -\cot \gamma + (x-\tilde{x}_0) \cos \gamma + i  (t-\tilde{t}_0) \sin \gamma + \frac{i}{2} \right. \\
& \left. \quad
- e^{-2 x \sin \gamma - i \gamma} \left[ \cot \gamma + (x-\tilde{x}_0) \cos \gamma - i (t-\tilde{t}_0) \sin \gamma + \frac{i}{2} \right] \right)
\end{align*}
and
\begin{align*}
N_v &= 4i (\sin \gamma)^2 e^{-x \sin \gamma - i t  \cos \gamma - \frac{i \gamma}{2}} \left( - \cot \gamma + (x-\tilde{x}_0)  \cos \gamma - i (t-\tilde{t}_0) \sin \gamma + \frac{i}{2} \right. \\
& \left. \quad
- e^{-2 x \sin \gamma - i \gamma} \left[ \cot \gamma + (x-\tilde{x}_0)  \cos \gamma + i (t-\tilde{t}_0) \sin \gamma + \frac{i}{2} \right] \right).
\end{align*}
The expression in the circular brackets for $N_u$ and $N_v$ are multiplied by
$$
4i (\sin \gamma)^2 e^{-x \sin \gamma \pm i t  \cos \gamma - \frac{i \gamma}{2}} \sim 4 \varepsilon^2 e^{\mp it} \quad \mbox{\rm as} \quad \varepsilon \to 0.
$$
Since the expansions of the exponential factors in powers of $\varepsilon$ do not modify the limit of $\varepsilon \to 0$, we collect nonzero terms in the expansions of $N_u e^{-it\cos \gamma}$ and $N_v e^{it \cos \gamma}$ at powers $\varepsilon^2$, $\varepsilon^3$, and $\varepsilon^4$:
\begin{align*}
\varepsilon^2:\qquad & 8  \tilde x_0, \\
\varepsilon^3:\qquad &   4 \tilde x_0 (i-2x), \\
\varepsilon^4:\qquad &  \frac{1}{3} (i-2x)^3 + 5 \tilde{x}_0 (i-2x)^2 - 4 i (t - \tilde{t}_0) (i - 2 x) + \frac{4 i}{3} - \frac{20}{3}  \tilde{x}_0,
\end{align*}
and
 \begin{align*}
\varepsilon^2:\quad &  8  \tilde x_0, \\
\varepsilon^3:\quad & 4 \tilde x_0 (i-2x), \\
\varepsilon^4:\quad & \frac{1}{3} (i-2x)^3 + 5 \tilde{x}_0 (i-2x)^2 + 4 i (t - \tilde{t}_0) (i - 2 x) + \frac{4 i}{3} - \frac{20}{3} \tilde{x}_0.
\end{align*}
By rescaling $\tilde{x}_0 \to \tilde{x}_0 \varepsilon^2$, we now obtain the nontrivial limit at the power $\varepsilon^4$:
\begin{align*}
\lim_{\varepsilon \to 0} D(M) \varepsilon^{-4} &= \frac{4}{3} x^4 -\frac{8}{3}  i x^3  + 2 x^2   - 2 i x -\frac{1}{4} - 4(t - \tilde{t}_0)^2 - 4 \tilde{x}_0 (i - 2 x),\\
\lim_{\varepsilon \to 0} N_u e^{it} \varepsilon^{-4}  &= -\frac{8}{3} x^3
+ 4i x^2 + 2x + i - 4i (t - \tilde{t}_0) (i - 2 x) + 8 \tilde{x}_0, \\
\lim_{\varepsilon \to 0} N_v e^{-it} \varepsilon^{-4}  &=  -\frac{8}{3} x^3
+ 4i x^2 + 2x + i + 4i (t - \tilde{t}_0) (i - 2 x) + 8 \tilde{x}_0,
\end{align*}
which yields the explicit expressions (\ref{u-algebric-soliton}) and (\ref{v-algebric-soliton}) from the quotients given by (\ref{double-soliton}).

\subsection{Computations of (\ref{calim-eigenvector-expression}) and (\ref{calim-generalized-eigenvector-expression})}

We substitute the phase shift (\ref{shift-phase}) into $ n_{1}^{(+)}(\lambda_0)$ and $ (n_{1}^{(+)})'(\lambda_0)$ given in (\ref{eigen-expre}) and (\ref{generalized-eigen-expre}). These expressions
define the eigenvector and the generalized eigenvector of the linear systems (\ref{eigenvector}) and (\ref{eigenvector-generalized}) for $\zeta_0 = e^{\frac{i \gamma}{2}}$ by
(\ref{Eigen-explicite-expression}) and (\ref{Gener-Eigen-explicite-expression}) respectively. By using $\gamma := \pi-\varepsilon$ and expanding in powers of $\varepsilon$, we derive (\ref{calim-eigenvector-expression}) and (\ref{calim-generalized-eigenvector-expression}).

After the transformation (\ref{shift-phase}), we rewrite  $ M_{22}$ and  $-\bar{C}_0 \bar{\lambda}_0
e^{-2i \theta(\bar{\lambda}_0)} (M_{22} b_1 + M_{12} b_2)$
in (\ref{eigen-expre}) as follows:
\begin{align*}
M_{22}= 1+ (\sin \gamma) e^{-2x\sin\gamma} \left(  \cot\gamma + 2 (x-\tilde x_0) \cos\gamma - 2 i (t-\tilde t_0) \sin\gamma \right)
\end{align*}
and
\begin{align*}
&-\bar{C}_0 \bar{\lambda}_0
e^{-2i \theta(\bar{\lambda}_0)} (M_{22} b_1 + M_{12} b_2) = -e^{-x\sin\gamma-it\cos\gamma -\frac{3i \gamma}{2}}\\
 &\qquad \qquad \times \left(  e^{i \gamma} + 2 \sin\gamma \left[ -\cot\gamma+(x-\tilde x_0 ) \cos\gamma - i (t-\tilde t_0) \sin\gamma +\frac{i}{2} \right] - e^{-2x \sin \gamma - 2 i \gamma}   \right).
\end{align*}
Expansion as $\varepsilon \to 0$ gives nonzero terms at powers $\varepsilon^1$, and $\varepsilon^2$.
 \begin{itemize}
   \item The coefficients of $ M_{22}$:
   \begin{align*}
   \varepsilon^1: \quad &  2 \tilde x_0,\\
     \varepsilon^2:\quad  & 2x^2-2i (t - \tilde{t}_0) +\frac{1}{2} -4x\tilde x_0.
   \end{align*}
   \item The coefficients of $-\bar{C}_0 \bar{\lambda}_0
e^{-2i \theta(\bar{\lambda}_0)} (M_{22} b_1 + M_{12} b_2) e^{it \cos \gamma}$:
   \begin{align*}
  \varepsilon^1: \quad & -2i \tilde x_0,\\
  \varepsilon^2: \quad & 2ix^2+ 4 x-2 (t - \tilde{t}_0) -\frac{3}{2} i + 3\tilde x_0+2ix\tilde x_0.
   \end{align*}
 \end{itemize}
Rescaling $\tilde{x}_0 \to \tilde{x}_0 \varepsilon^2$ yields the nontrivial limit at the power $\varepsilon^2$:
\begin{align*}
\lim_{\varepsilon \to 0}  & M_{22} \varepsilon^{-2}= 2x^2-2i (t - \tilde{t}_0) +\frac{1}{2},\\
\lim_{\varepsilon \to 0} & \left( -\bar{C}_0 \bar{\lambda}_0
e^{-2i \theta(\bar{\lambda}_0)} (M_{22} b_1 + M_{12} b_2) e^{-it} \right) \varepsilon^{-2}  = -2ix^2- 4 x+2 (t - \tilde{t}_0) +\frac{3}{2} i.
\end{align*}
By using (\ref{Eigen-explicite-expression}) and (\ref{eigen-expre}), taking the limit
\begin{align*}
\psi_0 := \lim_{\varepsilon\to 0}\varepsilon^2 \psi^{(+)}_{1}(\zeta_0),
\end{align*}
we obtain (\ref{calim-eigenvector-expression}) with
$$
\mathfrak{n}_0 := \lim_{\varepsilon\to 0}\varepsilon^2 n^{(+)}_{1}(\lambda_0)
$$
given below (\ref{calim-eigenvector-expression}) in the explicit form.

To obtain a nontrivial limit for the generalized eigenvector,
we rewrite the expression \eqref{Gener-Eigen-explicite-expression} in the equivalent form:
\begin{align*}
&(\psi^{(+)}_{1})'(\zeta_0)-2(\lambda_0-\bar\lambda_0)^{-1}\psi^{(+)}_{1}(\zeta_0) \\
&= 2 \zeta_0 e^{i \theta(\lambda_0)} [T(v,\zeta_0)]^{-1}\left[  (n_1^{(+)})'(\lambda_0) +  \left(i  \theta'(\lambda_0)-2(\lambda_0-\bar\lambda_0)^{-1}\right)
n_1^{(+)}(\lambda_0) \right] \\
& \qquad - \zeta_0^{-1} \partial_{\zeta} T(v,\zeta) \psi^{(+)}_1(\zeta_0),
 \end{align*}
After the transformation (\ref{shift-phase}), we rewrite  $ -M_{21}$ and  $\bar{C}_0 \bar{\lambda}_0 e^{-2i \theta(\bar{\lambda}_0)} (M_{11} b_2 + M_{21} b_1)$ in \eqref{generalized-eigen-expre} as follows:
\begin{align*}
-M_{21} = -2ie^{-2x\sin\gamma- i\gamma}\left[ \sin\gamma \left[(x-\tilde x_0)^2(\cos\gamma)^2+ (t-\tilde t_0)^2(\sin\gamma)^2  \right] + i(t-\tilde t_0)\sin\gamma\cos\gamma+ \frac{  \sin\gamma}{4} \right]
\end{align*}
and
\begin{align*}
& \bar{C}_0 \bar{\lambda}_0 e^{-2i \theta(\bar{\lambda}_0)} (M_{11} b_2 + M_{21} b_1)=  -i e^{-x\sin\gamma-it\cos\gamma-\frac{3i}{2}\gamma}  \\
& \qquad \times \left(    (x-\tilde x_0 ) \cos\gamma - i (t-\tilde t_0) \sin\gamma +\frac{i}{2}  +  e^{-2x\sin\gamma -3i\gamma} \left[    (x-\tilde x_0 ) \cos\gamma + i (t-\tilde t_0) \sin\gamma +  \frac{i}{2}   \right] \right).
\end{align*}
Expansion as $\varepsilon \to 0$ gives nonzero terms at powers $\varepsilon^0$, $\varepsilon^1$, and $\varepsilon^2$ for both components of the numerator of  $ (n_1^{(+)})'(\lambda_0) +  \left(i  \theta'(\lambda_0)- 2(\lambda_0-\bar\lambda_0)^{-1}\right)
n_1^{(+)}(\lambda_0)$.
\begin{itemize}
   \item The coefficients of $-M_{21} +  \left(i  \theta'(\lambda_0)-2(\lambda_0-\bar\lambda_0)^{-1}\right)
M_{22}$:
   \begin{align*}
   \varepsilon^0: \quad & - 2i \tilde x_0,\\
   \varepsilon^1: \quad & ix\tilde x_0+2i \tilde x_0^2,\\
   \varepsilon^2: \quad & \frac{4}{3}i \tilde x_0+x (t - \tilde t_0) -\frac{1}{3}i x^3+ 2i x^2 \tilde x_0- \frac{4}{3}i x +\tilde x_0t\\
   & +3x \tilde  x_0-4i x \tilde  x_0^2-2x^2 -2\tilde x_0^2-\frac{1}{2}+2i(t - \tilde t_0).
   \end{align*}
   \item The coefficients of
   $$
   \bar{C}_0 \bar{\lambda}_0 e^{-2i \theta(\bar{\lambda}_0)} \left[ M_{11} b_2 + M_{21} b_1 -
   \left(i  \theta'(\lambda_0)-2(\lambda_0-\bar\lambda_0)^{-1}\right)
(M_{22} b_1 + M_{12} b_2) \right] e^{it \cos \gamma} :
$$
   \begin{align*}
   e^{it}\varepsilon^2: \quad & - 2 \tilde x_0,\\
   e^{it}\varepsilon^3: \quad  & 6i \tilde x_0-5 x \tilde x_0,\\
   e^{it} \varepsilon^4: \quad &   ix (t - \tilde t_0) - it \tilde x_0 -ix^2+ \frac{ 5}{4}i- \frac{1}{3}x^3- \frac{15}{4}x\\
    &+3(t - \tilde t_0)  +\frac{155}{12}\tilde x_0 -6x^2 \tilde x_0-it \tilde x_0 +\frac{35}{2}i x \tilde x_0.
   \end{align*}
 \end{itemize}
Rescaling $\tilde{x}_0 \to \tilde{x}_0 \varepsilon^2$ yields the nontrivial limit at the power $\varepsilon^2$:
\begin{align*}
&\lim_{\varepsilon \to 0}   \left[ -M_{21} +  \left(i  \theta'(\lambda_0)-2(\lambda_0-\bar\lambda_0)^{-1}\right)
M_{22} \right]\varepsilon^{-2}\\
&= -\frac{1}{3}i x^3-2x^2 - \frac{4}{3}i x+ x (t - \tilde t_0) +2i (t - \tilde t_0)-\frac{1}{2}- 2i \tilde x_0,\\
&\lim_{\varepsilon \to 0} \bar{C}_0 \bar{\lambda}_0 e^{-2i \theta(\bar{\lambda}_0)} \left[ M_{11} b_2 + M_{21} b_1 -
\left(i  \theta'(\lambda_0)-2(\lambda_0-\bar\lambda_0)^{-1}\right)
(M_{22} b_1 + M_{12} b_2) \right] e^{-it}  \varepsilon^{-2} \\
 &= - \frac{1}{3}x^3-ix^2- \frac{15}{4}x +ix (t - \tilde t_0) + \frac{ 5}{4}i + 3 (t - \tilde t_0)  - 2 \tilde x_0.
\end{align*}
By using (\ref{Gener-Eigen-explicite-expression}) and (\ref{generalized-eigen-expre}), taking the limit
\begin{align*}
\psi_1 := \lim_{\varepsilon\to 0}\varepsilon^2 \left[ (\psi^{(+)}_{1})'(\zeta_0)-2(\lambda_0-\bar\lambda_0)^{-1} \psi^{(+)}_{1}(\zeta_0)\right],
\end{align*}
we obtain (\ref{calim-generalized-eigenvector-expression}) with
$$
\mathfrak{n}_1  := \lim_{\varepsilon\to 0}\varepsilon^2 \left[ (n_1^{(+)})'(\lambda_0) +  \left(i  \theta'(\lambda_0)- 2(\lambda_0-\bar\lambda_0)^{-1}\right)
n_1^{(+)}(\lambda_0) \right]
$$
given below (\ref{calim-generalized-eigenvector-expression}) in the explicit form.

\appendix

\section{Single-soliton from a simple pole}
\label{app-A}

Here we consider solutions of the normalized RH problem for the refelectionless potential $r_{\pm}(\lambda) \equiv 0$ for $\lambda \in \R$ with a simple pole at $\lambda_0 \in \mathbb{C}^+$. By symmetry (\ref{symmetry-psi}),
both $\lambda_0 \in \mathbb{C}^+$ and $\bar{\lambda}_0 \in \mathbb{C}^-$ are
poles of $M(\lambda)$ in $\mathbb{C}$. The normalized RH problem
can be rewritten in the form:

\vspace{0.2cm}

\centerline{\fbox{\parbox[cs]{1.0\textwidth}{
			{\bf RH problem.} Find a complex-valued function $M(\lambda)$ with the following properties:
			\begin{itemize}
			\item $ M(\lambda)$ has simple poles at $\lambda_0 \in \mathbb{C}^+$ and $\bar{\lambda}_0 \in \mathbb{C}^-$.
			\item $ M(\lambda) \to \mathbb I$ as $|\lambda| \to \infty$,
			where $\mathbb I$ is the $2$-by-$2$ identity matrix.
			\item $ M_{+}(\lambda)= M_{-}(\lambda) $ for every $\lambda\in\mathbb{R}$, where $M_{\pm}(\lambda) := \lim\limits_{{\rm Im}(\lambda) \to \pm 0} M(\lambda)$.
			\end{itemize}
}}}

\vspace{0.2cm}

The solution of the RH problem is immediately given by
	\begin{align}
	\label{solution-RH}
	M(x,t,\lambda)=\mathbb I+\frac{{\rm Res}_{\lambda=\lambda_0} M_{+}(\lambda)}{\lambda-\lambda_0}+\frac{{\rm Res}_{\lambda=\bar\lambda_0} M_{-}(\lambda)}{\lambda-\bar\lambda_0}.
	\end{align}
In order to compute the residue terms, we note from (\ref{P-matrix}) that
$\lambda_0$ is a simple zero of $\alpha(\lambda)$ extended to $\mathbb{C}^+$ by Lemma \ref{lemma-2}. Since  it follows from \eqref{SMmatrix-1} that
\begin{align*}
\alpha(\lambda) = a(\zeta) = \det\left(\psi^{(+)}_{1}(\zeta),\psi^{(-)}_{2}(\zeta)\right),
\end{align*}
we define $\zeta_0 := \sqrt{\lambda_0}$ and a constant $b_0 \in \mathbb{C}$ such that the columns of $\psi^{(\pm)}(\zeta)$ satisfying (\ref{asymptotic-Lax-pair})
are related at $\zeta = \zeta_0$ by
	\begin{align}\label{eigenvector-simple-psi}
	\psi^{(-)}_{2}(\zeta_0) = b_0\psi^{(+)}_{1}(\zeta_0).
	\end{align}
Since $\lambda_0 \in \mathbb{C}^+$, it follows from (\ref{asymptotic-Lax-pair}) and (\ref{eigenvector-simple-psi})  that $\psi^{(-)}_{2}(\zeta_0)$ decays to zero exponentially fast both as $x \to \pm \infty$. Hence it is the eigenvector of the linear system \eqref{Lax-pair} for $\zeta = \zeta_0$.
	
By using the transformation \eqref{Transformation-1}, we can rewrite  \eqref{eigenvector-simple-psi} in the form:
	\begin{align}\label{tilde-n-2-b-n-1}
	n^{(-)}_{2}(\lambda_0)=b_0\zeta_0^{-1} n^{(+)}_{1}(\lambda_0) e^{2i\theta(\lambda_0)},
	\end{align}
	where $\theta(\lambda)$ is given by (\ref{theta}). By using (\ref{P-matrix}) and (\ref{tilde-n-2-b-n-1}), we compute the residue term as follows
\begin{align}\label{single-residue-lambda0}
{\rm Res}_{\lambda=\lambda_0} P(\lambda) = \left( \vec{\bf 0} \quad \frac{ n^{(-)}_{2}(\lambda_0)}{  \alpha'(\lambda_0)}\right)
= \left(\vec{\textbf{0}}\quad  \frac{b_0}{\zeta_0 \alpha'(\lambda_0)} n^{(+)}_{1}(\lambda_0)e^{2i\theta(\lambda_0)}\right),
\end{align}
where $\vec{\textbf{0}}$ is the $2$-by-$1$ null vector.

By using the symmetry condition (\ref{symmetry-psi}), we have
\begin{align*}
  \psi_1^{(\pm)}(\zeta)= \left(
                           \begin{array}{cc}
                             0 & 1 \\
                             -1 & 0 \\
                           \end{array}
                         \right) \overline{\psi}_2^{(\pm)}(\zeta),\quad
     \psi_2^{(\pm)}(\zeta)= \left(
                           \begin{array}{cc}
                             0 & -1 \\
                             1 & 0 \\
                           \end{array}
                         \right) \overline{\psi}_1^{(\pm)}(\zeta),
\end{align*}
then (\ref{eigenvector-simple-psi}) can be transformed into
	\begin{align*}
	\psi^{(-)}_{1}(\bar\zeta_0) = -\bar b_0\psi^{(+)}_{2}(\bar\zeta_0).
	\end{align*}
Using the transformation  (\ref{Transformation-1}), we  obtain
\begin{align}
\label{sym-other}
n^{(-)}_{1}(\bar\lambda_0) = -\bar b_0 \bar\zeta_0 n^{(+)}_{2}(\bar\lambda_0) e^{-2i\theta(\bar\lambda_0)},
\end{align}
from which we compute the other residue term by using (\ref{P-matrix}) and (\ref{sym-other}):
\begin{align}
\label{single-residue-conjecture-lambda0}
{\rm Res}_{\lambda=\bar{\lambda}_0} P(\lambda) = \left(  \frac{ n^{(-)}_{1}(\bar{\lambda}_0)}{  \bar{\alpha}'(\bar{\lambda}_0)} \quad \vec{\bf 0} \right)
= \left(-\frac{\bar{b}_0 \bar{\zeta}_0}{\bar{\alpha}'(\bar{\lambda}_0)} n^{(+)}_{2}(\bar{\lambda}_0) e^{-2i\theta(\bar{\lambda}_0)} \quad
\vec{\textbf{0}} \right),
\end{align}

We recall that $M(\lambda) = [P^{\infty}]^{-1} P(\lambda)$
with $P^{\infty} = {\rm diag}(n_1^{+\infty},n_2^{+\infty})$ with $n_2^{+\infty} = \overline{n_1^{+\infty}}$, see (\ref{p-infty}). Using the first column of (\ref{solution-RH}) at $\lambda = \lambda_0$ due to (\ref{single-residue-lambda0}) and the second column of (\ref{solution-RH})
at $\lambda = \bar{\lambda}_0$ due to (\ref{single-residue-conjecture-lambda0}), we obtain a closed system of linear algebraic equations:
\begin{align}
\label{evaluate-conjecture-lambda-0}
n^{(+)}_{1}(\lambda_0) = n_1^{+\infty} e_1
- \frac{\bar\lambda_0 \bar{c}_0}{\lambda_0-\bar\lambda_0} n^{(+)}_{2}(\bar\lambda_0)e^{-2i\theta(\bar\lambda_0)}
\end{align}
and
\begin{align}
\label{evaluate-lambda-0}
n^{(+)}_{2}(\bar\lambda_0) = n_2^{+\infty} e_2
+ \frac{c_0 }{\bar\lambda_0-\lambda_0} n^{(+)}_{1}(\lambda_0)e^{2i\theta(\lambda_0)},
\end{align}
where $e_1 = (1,0)^T$, $e_2 = (0,1)^T$, and
$$
c_0 := \frac{b_0}{\zeta_0 \alpha'(\zeta_0)}.
$$
Then, from \eqref{evaluate-conjecture-lambda-0} and \eqref{evaluate-lambda-0}, we have
\begin{align*}
	n^{(+)}_{1}(\lambda_0) &=  n^{+\infty}_1 e_1
	- \frac{\bar{\lambda}_0 \bar{c}_0}{\lambda_0 - \bar{\lambda}_0} n_2^{+\infty} e^{-2i \theta(\bar{\lambda}_0)} e_2
	+\frac{\bar\lambda_0 |c_0|^2}{(\lambda_0-\bar\lambda_0)^2} e^{2i\theta(\lambda_0)-2i\theta(\bar\lambda_0)} n^{(+)}_{1}(\lambda_0),\\
	n^{(+)}_{2}(\bar\lambda_0)&=  n^{+\infty}_2 e_2 +
	\frac{c_0 }{\bar\lambda_0-\lambda_0} n^{+\infty}_{1} e^{2i\theta(\lambda_0)} e_1 + \frac{\bar\lambda_0 |c_0|^2}{(\lambda_0-\bar\lambda_0)^2}
	e^{2i\theta(\lambda_0)-2i\theta(\bar\lambda_0)} n^{(+)}_{2}(\bar\lambda_0).
\end{align*}
By using (\ref{recover-potential}), we obtain the explicit solutions
to the MTM system (\ref{MTM}) in the form
\begin{align*}
u = \lim_{\lambda \to 0} \overline{M}_{12}(\lambda) = - \frac{\bar{c}_0}{\bar{\lambda}_0}  n_1^{+\infty} \overline{n^{(+)}_{11}(\lambda_0)} e^{-2i\theta(\bar{\lambda}_0)} = - \frac{\bar{c}_0 e^{-2i \theta(\bar{\lambda}_0)}}{\bar{\lambda}_0 - \frac{|\lambda_0|^2 |c_0|^2}{(\lambda_0 - \bar{\lambda}_0)^2}  e^{2i \theta(\lambda_0)-2i\theta(\bar{\lambda}_0)}}
\end{align*}
and
\begin{align*}
v = \lim_{\lambda \to 0} M_{21}(\lambda) =\bar{c}_0 n_1^{+\infty}(x) n^{(+)}_{22}(\bar\lambda_0) e^{-2i\theta(\bar\lambda_0)} = \frac{\bar{c}_0 e^{-2i \theta(\bar{\lambda}_0)}}{1 - \frac{\bar{\lambda}_0 |c_0|^2}{(\lambda_0 - \bar{\lambda}_0)^2}  e^{2i \theta(\lambda_0)-2i\theta(\bar{\lambda}_0)}}.
\end{align*}

To simplify the expressions for the single-soliton solution $(u,v)$, we pick
$\lambda_0 = e^{i \gamma}$ with $\gamma \in (0,\pi)$ on
$\mathbb{S}^1 \cap \mathbb{C}^+$. A more general solution can be obtained with the Lorentz symmetry (\ref{MTM-Lorentz}). If $\lambda_0 = e^{i \gamma}$, we obtain from (\ref{theta}) that
	\begin{align*}
2 i \theta(\lambda_0) = -\alpha x + i \beta t,
	\end{align*}
	where $\alpha = \sin \gamma$ and $\beta = \cos \gamma$.
In addition, we choose
$$
c_0 = 2i \sin \gamma e^{\frac{i \gamma}{2}}
$$	
and obtain the single-soliton solution in the form
\begin{align}
\label{sol-u}
u(x,t) = 2i \alpha \frac{e^{-\alpha x - i \beta t + \frac{i \gamma}{2}}}{1 + e^{-2 \alpha x + i \gamma}} = i \alpha \; {\rm sech}\left(\alpha x - \frac{i \gamma}{2}\right) \; e^{-i \beta t}
\end{align}
and
\begin{align}
\label{sol-v}
v(x,t) = -2i  \alpha \; \frac{e^{-\alpha x - i \beta t - \frac{i \gamma}{2}}}{1 + e^{-2 \alpha x - i \gamma}} = -i \alpha \; {\rm sech}\left(\alpha x +  \frac{i\gamma}{2}\right) \; e^{-i \beta t},
\end{align}
which coincides with (\ref{one_soliton_1}) since $\alpha = \sin \gamma$ and $\beta = \cos \gamma$.
A more general solution with two translational parameters $x_0,t_0 \in \R$ can be obtained by using the symmetries (\ref{MTM-symm}) or by introducing two translational parameters in the expression for $c_0 \in \mathbb{C}$.
	
Finally, we write the explicit form of the eigenvector $\psi_1^{(+)}(\zeta_0)$, see (\ref{eigenvector-simple-psi}), which satisfies (\ref{eigenvector}) with $(u,v)$ given by (\ref{sol-u})--(\ref{sol-v}) and with $\zeta_0 = e^{\frac{i \gamma}{2}}$. By using the transformation \eqref{Transformation-1}, we write
	\begin{align*}
\psi^{(+)}_{1}(\zeta_0)	= e^{i\theta(\lambda_0)} [T(v,\zeta_0)]^{-1} n_1^{(+)}(\lambda_0),
\end{align*}
where
\begin{align*}
[T(v,\zeta_0)]^{-1} = \left( \begin{matrix} 1  & 0 \\
-e^{-\frac{i \gamma}{2}} v & e^{-\frac{i\gamma}{2}} \end{matrix} \right)
\end{align*}
and
\begin{align*}
n_1^{(+)}(\lambda_0) &= \frac{n^{+\infty}_1 e_1
- \frac{\bar{\lambda}_0 \bar{c}_0}{\lambda_0 - \bar{\lambda}_0} n_2^{+\infty} e^{-2i \theta(\bar{\lambda}_0)} e_2}{1 - \frac{\bar\lambda_0 |c_0|^2}{(\lambda_0-\bar\lambda_0)^2} e^{2i\theta(\lambda_0)-2i\theta(\bar\lambda_0)}} \\
&= \frac{1}{1 + e^{-2 \alpha x - i \gamma}}
\left( \begin{array}{c} n_1^{+\infty} \\
e^{-\alpha x - i \beta t - \frac{3 i \gamma}{2}} n_2^{+\infty} \end{array} \right) \\
&= \frac{1}{2} {\rm sech}\left(\alpha x + \frac{i \gamma}{2}\right)
\left( \begin{array}{c} e^{\alpha x + \frac{i \gamma}{2} + \frac{i}{4} \int_{+\infty}^x (|u|^2 + |v|^2) dx} \\
e^{- i \beta t - i \gamma - \frac{i}{4} \int_{+\infty}^x (|u|^2 + |v|^2) dx} \end{array} \right).
\end{align*}
We note that
$$
\frac{1}{4} \int_{\mathbb{R}} (|u|^2 + |v|^2) dx =
 \int_{\mathbb{R}} \frac{\sin^2 \gamma}{\cosh(2 \sin \gamma \; x) + \cos\gamma} dx = \gamma.
$$
Hence we can write
\begin{equation}
\label{eig-simple-isol}
e^{i\theta(\lambda_0)} n_1^{(+)}(\lambda_0) = \frac{1}{2} {\rm sech}\left(\alpha x + \frac{i \gamma}{2}\right) \left( \begin{array}{c} e^{\frac{1}{2} \alpha x + \frac{i}{2} \beta t + \frac{i \gamma}{2} + \frac{i}{4} \int_{+\infty}^x (|u|^2 + |v|^2) dx} \\
e^{-\frac{1}{2} \alpha x - \frac{i}{2} \beta t - \frac{i}{4} \int_{-\infty}^x (|u|^2 + |v|^2) dx}\end{array} \right),
\end{equation}
which decays exponentially to $0$ as $x \to \pm \infty$. Since $[T(v,\zeta_0)]^{-1}$ is bounded, then $\psi^{(+)}_{1}(\zeta_0) \in H^1(\mathbb{R},\mathbb{C}^2)$ is an exponentially decaying eigenvector of the linear system (\ref{eigenvector}).

The algebraic soliton appears in the singular limit $\gamma \to \pi$, where $\zeta_0 \to i$. The simple eigenvalue $\zeta_0 = i$ is embedded into the continuous spectrum of the Lax system (\ref{Lax-pair}), which is located on $\R \cup (i \R)$. By writing $\gamma := \pi - \epsilon$ and taking the limit $\epsilon \to 0$ in (\ref{sol-u}) and (\ref{sol-v}), we obtain
\begin{equation}
\label{alg-sol}
u(x,t) = \frac{2i}{1 - 2 ix} e^{it}, \quad
v(x,t) = -\frac{2i}{1+2ix} e^{it}.
\end{equation}
The eigenvector $\psi_0$ for the simple embedded eigenvalue $\zeta_0 = i$
of the linear system (\ref{eigenvector}) with $(u,v)$ given by (\ref{alg-sol}) is obtained from (\ref{eig-simple-isol}) in the limit $\epsilon \to 0$ in the explicit form:
\begin{align}
\label{eig-simple-emb}
\psi_0 := \lim_{\epsilon \to 0} \epsilon \psi^{(+)}_{1}(\zeta_0) = \frac{1}{1 + 2 ix}  \left( \begin{matrix} 1  & 0 \\
i v & -i \end{matrix} \right) \left( \begin{array}{c} e^{- \frac{i t}{2} + i \arctan(2x)} \\
e^{\frac{it}{2} - \frac{i \pi}{2} - i \arctan(2x)}\end{array} \right),
\end{align}
where we have used the elementary integral
$$
\frac{1}{4} \int_{-\infty}^x (|u|^2 + |v|^2) dx =
\int_{-\infty}^x \frac{2}{1+4x^2} dx = \frac{\pi}{2} + \arctan(2x).
$$
Based on the explicit expression (\ref{eig-simple-emb}), we confirm
that $\psi_0 \in H^1(\mathbb{R},\mathbb{C}^2)$ is an algebraically
decaying eigenvector of the linear system (\ref{eigenvector}) such that
$|\psi_0(x)| = \mathcal{O}(|x|^{-1})$ as $|x| \to \infty$.

\section{Exponential double-solitons in the bilinear Hirota method}
\label{app-B}

Here we obtain the exponential double-soliton solutions
by using the bilinear Hirota method developed in \cite{Chen-SAPM-2023}.
To proceed with computations, we use the parameterization from \cite{Han-2024} and write the general exponential two-soliton solutions in the form:
\begin{align}
	\label{exp-double-sol}
	u = \frac{g}{\bar{f}}, \quad v=\frac{h}{f},
	\end{align}
	where
	\begin{align*}
	f &= 1 + e^{-2 \xi_1 - \mathrm{i} \gamma_1} + e^{-2 \xi_2 - \mathrm{i} \gamma_2}
	+ A_{12} e^{-2 \xi_1 - 2 \xi_2 - \mathrm{i} \gamma_1 -\mathrm{i} \gamma_2} - 4 \sqrt{\delta_1 \delta_2} \sin \gamma_1 \sin \gamma_2
	e^{-\xi_1 - \xi_2 - \frac{\mathrm{i}}{2} \gamma_1 - \frac{\mathrm{i}}{2} \gamma_2} \notag \\
	& \quad
	\times
	\left[ \frac{\delta_1 e^{-\mathrm{i} (\eta_1 - \eta_2)}}{(\delta_1 e^{-\frac{\mathrm{i}}{2}(\gamma_1 + \gamma_2)}  - \delta_2  e^{\frac{\mathrm{i}}{2}(\gamma_1 + \gamma_2)})^2} +
	\frac{\delta_2 e^{\mathrm{i} (\eta_1 - \eta_2)}}{(\delta_1 e^{\frac{\mathrm{i}}{2}(\gamma_1 + \gamma_2)}  - \delta_2  e^{-\frac{\mathrm{i}}{2}(\gamma_1 + \gamma_2)})^2} \right],
	\end{align*}	
	\begin{align*}
	h &= -\bar{\alpha}_1 e^{-\xi_1 - \mathrm{i} \eta_1} \left[ 1 + \left( \frac{p_1 - p_2}{p_1 + \bar{p}_2} \right)^2 e^{-2 \xi_2 + \mathrm{i} \gamma_2} \right]
	- \bar{\alpha}_2 e^{-\xi_2 - \mathrm{i} \eta_2} \left[ 1 + \left( \frac{p_1 - p_2}{\bar{p}_1 + p_2} \right)^2 e^{-2 \xi_1 + \mathrm{i} \gamma_1} \right],
	\end{align*}
	and
	\begin{align*}
	g &= \frac{\mathrm{i} \bar{\alpha}_1}{p_1} e^{-\xi_1 - \mathrm{i} \eta_1}
	\left[ 1 + \left( \frac{p_1 - p_2}{p_1 + \bar{p}_2} \right)^2 e^{-2 \xi_2 + 3 \mathrm{i} \gamma_2} \right]
	+\frac{\mathrm{i} \bar{\alpha}_2}{p_2} e^{-\xi_2 - \mathrm{i} \eta_2} \left[ 1 + \left( \frac{p_1 - p_2}{\bar{p}_1 + p_2} \right)^2 e^{-2 \xi_1 + 3 \mathrm{i} \gamma_1} \right],
	\end{align*}
with arbitrary parameters $\gamma_j \in (0, \pi)$, $\delta_j > 0$,  $(x_j,t_j) \in \mathbb{R}^2$, and uniquely defined for $j = 1,2$ as
\begin{align*}
	p_j = \mathrm{i} \delta_j e^{-i \gamma_j}, \quad
\alpha_j = 2 \sqrt{\delta_j} \sin \gamma_j e^{\frac{\mathrm{i} \gamma_j}{2}},
\end{align*}
	\begin{align*}
	\xi_j &= \sin \gamma_j \left( \frac{1}{2} (\delta_j + \delta_j^{-1}) x
	+ \frac{1}{2} (\delta_j-\delta_j^{-1}) t + x_j\right), \\
	\eta_j &= \cos \gamma_j \left( \frac{1}{2} (\delta_j - \delta_j^{-1}) x
	+ \frac{1}{2} (\delta_j + \delta_j^{-1}) t  + t_j\right),
	\end{align*}
	and
	\begin{align*}
	A_{12} = \left( \frac{\delta_1^2 + \delta_2^2 - 2 \delta_1 \delta_2 \cos(\gamma_1 - \gamma_2)}{\delta_1^2 + \delta_2^2 - 2 \delta_1 \delta_2 \cos(\gamma_1 + \gamma_2)} \right)^2.
	\end{align*}
Due to Lorentz transformation (\ref{MTM-Lorentz}), we can consider
the exponential double-solitons with zero speed, for which we take
$\delta_1 = \delta_2 = 1$. In addition, we use translational symmetry and replace $e^{-\xi_{1,2}}$ by $e^{-\xi_{1,2}} \sin \left(\frac{\gamma_1 + \gamma_2}{2}\right)$ in all expressions. \\

\underline{Considering $f$,} we obtain
\begin{align*}
f &= 1 + \sin^2\left(\frac{\gamma_1+\gamma_2}{2}\right) \left[ e^{-2 \xi_1 - \mathrm{i} \gamma_1} + e^{-2 \xi_2 - \mathrm{i} \gamma_2} \right]
+ \sin^4\left(\frac{\gamma_1-\gamma_2}{2}\right)   e^{-2 \xi_1 - 2 \xi_2 - \mathrm{i} \gamma_1 -\mathrm{i} \gamma_2} \\
& \qquad + 2 \sin \gamma_1 \sin \gamma_2
e^{-\xi_1 - \xi_2 - \frac{\mathrm{i}}{2} \gamma_1 - \frac{\mathrm{i}}{2} \gamma_2} \cos(\eta_1 - \eta_2).
\end{align*}	
We now define the small parameter $\epsilon$ from
$\gamma_1 = \gamma + \epsilon$ and $\gamma_2 = \gamma - \epsilon$
and take the limit $\epsilon \to 0$ for a given $\gamma \in (0,\pi)$. In order to get a nontrivial limit, we also define the translational parameters
$x_{1,2}$ from the power series:
\begin{align*}
\xi_1 &= (\sin \gamma_1) (x + x_1) = \log(\epsilon) + \alpha (x - x_0) + \epsilon \beta (x - \tilde{x}_0)
-\frac{1}{2} \epsilon^2 \alpha (x - \tilde{\tilde{x}}_0) + \mathcal{O}(\epsilon^3), \\
\xi_2 &= (\sin \gamma_2) (x + x_2) = \log(\epsilon) + \alpha (x - x_0) - \epsilon \beta (x - \tilde{x}_0)
-\frac{1}{2} \epsilon^2 \alpha (x - \tilde{\tilde{x}}_0) + \mathcal{O}(\epsilon^3),
\end{align*}
with new translational parameters $x_0, \tilde{x}_0, \tilde{\tilde{x}}_0 \in \mathbb{R}$ and with $\alpha = \sin \gamma$, $\beta = \cos \gamma$. Similarly, we define the translational parameters $t_{1,2}$ from the power series:
\begin{align*}
\eta_1 &= (\cos \gamma_1) (t + t_1) = -\frac{\pi}{2} + \beta (t - t_0) - \epsilon \alpha (t - \tilde{t}_0)
-\frac{1}{2} \epsilon^2 \beta (t - \tilde{\tilde{t}}_0) + \mathcal{O}(\epsilon^3), \\
\eta_2 &= (\cos \gamma_2) (t + t_2) = \frac{\pi}{2} +  \beta (t - t_0) + \epsilon \alpha (t - \tilde{t}_0)
-\frac{1}{2} \epsilon^2 \beta (t - \tilde{\tilde{t}}_0) + \mathcal{O}(\epsilon^3),
\end{align*} 	
with new translational parameters $t_0, \tilde{t}_0, \tilde{\tilde{t}}_0 \in \mathbb{R}$. With these choices, we expand the expression for $f$ in powers of $\epsilon$ and obtain the following explicit expression
\begin{align}
	\lim_{\epsilon \to 0} f = 1 + e^{-2 \xi -i\gamma} \left[ 2 + \alpha^2
	(2 \beta (x - \tilde{x}_0) + i)^2 + 4 \alpha^4 (t - \tilde{t}_0)^2 \right] + e^{-4 \xi - 2i\gamma},
	  \label{f-3}
\end{align}
where $\xi := \alpha (x-x_0)$. \\

\underline{Considering $h$,} we obtain
	\begin{align*}
h &= -2 \sin \gamma_1 \sin\left(\frac{\gamma_1+\gamma_2}{2}\right) e^{-\xi_1 - \mathrm{i} \eta_1 - \frac{\mathrm{i} \gamma_1}{2} } \left[ 1 + \left( \frac{e^{-\mathrm{i} \gamma_1} - e^{-\mathrm{i} \gamma_2}}{e^{-\mathrm{i} \gamma_1} - e^{\mathrm{i} \gamma_2}} \right)^2 \sin^2\left(\frac{\gamma_1+\gamma_2}{2}\right) e^{-2 \xi_2 + \mathrm{i} \gamma_2} \right] \\
& \quad
- 2 \sin \gamma_2 \sin\left(\frac{\gamma_1+\gamma_2}{2}\right) e^{-\xi_2 - \mathrm{i} \eta_2 - \frac{\mathrm{i} \gamma_2}{2} }\left[ 1 + \left( \frac{e^{-\mathrm{i} \gamma_1} - e^{-\mathrm{i} \gamma_2}}{e^{-\mathrm{i} \gamma_2} - e^{\mathrm{i} \gamma_1}} \right)^2 \sin^2\left(\frac{\gamma_1+\gamma_2}{2}\right) e^{-2 \xi_1 + \mathrm{i} \gamma_1} \right],
\end{align*}
With the choice of the translational parameters above, we expand the expression for $h$ in the powers in $\epsilon$ and obtain the following explicit expression
\begin{align}
\lim_{\epsilon \to 0} h &=
4 \mathrm{i} \alpha^2 e^{-\xi - \mathrm{i} \eta - \frac{\mathrm{i} \gamma}{2}}
\left[ -\cot \gamma + \beta (x - \tilde{x}_0) - \mathrm{i} \alpha (t - \tilde{t}_0) + \frac{\mathrm{i}}{2} \right. \notag \\
	& \qquad \left. - e^{-2 \xi - \mathrm{i} \gamma}
	\left( \cot \gamma +
	\beta (x - \tilde{x}_0) + \mathrm{i} \alpha (t - \tilde{t}_0) + \frac{\mathrm{i}}{2} \right) \right],
\label{h-1}
\end{align}
where $\eta := \beta (t - t_0)$. \\

\underline{Considering $g$,} we obtain
\begin{align*}
g &= 2 \sin \gamma_1 \sin\left(\frac{\gamma_1+\gamma_2}{2}\right) e^{-\xi_1 - \mathrm{i} \eta_1 + \frac{\mathrm{i} \gamma_1}{2} } \left[ 1 + \left( \frac{e^{-\mathrm{i} \gamma_1} - e^{-\mathrm{i} \gamma_2}}{e^{-\mathrm{i} \gamma_1} - e^{\mathrm{i} \gamma_2}} \right)^2 \sin^2\left(\frac{\gamma_1+\gamma_2}{2}\right) e^{-2 \xi_2 + 3 \mathrm{i} \gamma_2} \right] \\
& \quad
+ 2 \sin \gamma_2 \sin\left(\frac{\gamma_1+\gamma_2}{2}\right) e^{-\xi_2 - \mathrm{i} \eta_2 + \frac{\mathrm{i} \gamma_2}{2} }\left[ 1 + \left( \frac{e^{-\mathrm{i} \gamma_1} - e^{-\mathrm{i} \gamma_2}}{e^{-\mathrm{i} \gamma_2} - e^{\mathrm{i} \gamma_1}} \right)^2 \sin^2\left(\frac{\gamma_1+\gamma_2}{2}\right) e^{-2 \xi_1 + 3 \mathrm{i} \gamma_1} \right].
\end{align*}
With the same computations, this yields the following explicit expression
\begin{align}
\lim_{\epsilon \to 0} g &=
-4 \mathrm{i} \alpha^2 e^{- \xi - \mathrm{i} \eta + \frac{\mathrm{i} \gamma}{2}}
\left[ -\cot \gamma + \beta (x - \tilde{x}_0) - \mathrm{i} \alpha (t - \tilde{t}_0) - \frac{\mathrm{i}}{2} \right. \notag \\
& \qquad \left. - e^{-2 \xi + \mathrm{i} \gamma}
\left( \cot \gamma +
\beta (x - \tilde{x}_0) + \mathrm{i} \alpha (t - \tilde{t}_0) - \frac{\mathrm{i}}{2} \right) \right],
\label{g-1}
\end{align}
The exponential double-solitons are given by the explicit expression (\ref{exp-double-sol}) with $f$, $h$, and $g$ given by (\ref{f-3}), (\ref{h-1}), and (\ref{g-1}). By using translational symmetry, we can redefine
$$
\tilde{x}_0 \to \tilde{x}_0 - \frac{1}{\sin \gamma}, \quad
\tilde{t}_0 \to \tilde{t}_0 - \frac{1}{2 \sin \gamma}
$$
to obtain exactly the same expressions as in Theorem \ref{explicit-expression-NU-Muv} for $f = D(M)$, $h = N_v$, and $g = \bar{N}_u$.

\section*{Acknowledgments}

Z.-Q. Li was supported   by the Graduate  Innovation  Program  of China University of Mining and Technology under Grant No. 2024WLKXJ117 and the  Postgraduate  Research  \&  Practice  Innovation Program of Jiangsu Province under Grant No. KYCX24\_2676. S.-F. Tian was supported   by the National NaturalScience Foundation of China under Grant No. 12371255 and the Fundamental Research Funds for the Central Universities of CUMT under Grant No. 2024ZDPYJQ1003. D. E. Pelinovsky is supported by the NSERC Discovery grant.

%

\section*{Data availability statement}

The data which supports the findings of this study is available within the article.

\bibliographystyle{plain}

\end{document}